\newcommand{\fig}[3] 
{\begin{figure}[htbp]
 \begin{center}
 \end{center}
 \caption{#3}
 \label{#2}
 \end{figure}
}
\newcounter{pseudocode}
\title{\Large Improved Online Square-into-Square Packing \thanks{This work has been supported partially by NSF award CCF-1218620. I also want to thank my mentor Prof.\ Gandhi for all his
support.}
}
\author{Brian Brubach} 
\institute{
\email{brian.brubach@gmail.com}
}
\begin{document}

\maketitle

\begin{abstract} 
In this paper, we show an improved bound and new algorithm for the online square-into-square packing problem. This two-dimensional packing problem involves packing an online sequence of squares into a unit square container without any two squares overlapping. The goal is to find the largest area $\alpha$ such that any set of squares with total area $\alpha$ can be packed. We show an algorithm that can pack any set of squares with total area $\alpha \leq 3/8$ into a unit square in an online setting, improving the previous bound of $11/32$.
\end{abstract}

\section{Introduction}

In packing problems, we wish to place a set of objects into a container such that no two objects overlap. These problems have been studied extensively and have numerous applications. However, even common one-dimensional versions of packing problems, such as the Knapsack problem, are NP-hard. In these difficult problems, it is often important to know whether it is even feasible to pack a given set into a particular container. In fact, for the two-dimensional case, it is worth noting that merely checking whether a given set of squares can be packed into a unit square was shown to be NP-hard by Leung, et al.\ \cite{L90}. 

The square-into-square packing problem asks, ``What is the largest area $\alpha$ such that any set of squares with total area $\alpha$ can be packed into a unit square without overlapping?'' It is trivial to show the upper bound that $\alpha \leq 1/2$. Two squares of height $1/2 + \epsilon$ cannot be packed into a unit square container. In addition, Moon and Moser \cite{MM67} showed in 1967 that the bound of $1/2$ is tight in the offline case. Squares can be sorted in decreasing order and packed from left-to-right into horizontal ``shelves'' starting along the bottom of the container. The height of each shelf is determined by the largest object in the shelf and when one shelf fills, a new shelf is opened directly above it.

In the online version of the problem, we have no knowledge of the full set of squares to be packed. Squares are received one at a time and each must be packed before seeing the next square. Once a square is packed, it cannot be moved. In this case, the successful offline approach cannot be used as it requires sorting the set. The current best lower bound is $\alpha \geq 11/32$ by Fekete and Hoffmann \cite{FH13} in 2013. They take a dynamic, multi-directional shelf-packing approach that combines horizontal shelves with dynamically allocated vertical shelves.

\subsection{Related work}

\noindent
\textbf{Offline Square Packing. }
Early related work involved packing a set of objects into the smallest possible rectangle container. Moser \cite{M66} posed the following question in 1966: ``What is the smallest number $A$ such that any family of objects with total area at most $1$ can be packed into a rectangle of area $A$?'' Since then, there have been many results for the offline packing of squares into rectangle containers.

In 1967, Moon and Moser \cite{MM67} showed that any set of squares with total area $1$ can be packed into a square of height $\sqrt{2}$. This established $A \leq 2$ or in the terms of our problem, $\alpha \geq 1/2$. This result was followed by several improvements on the value of $A$ using rectangular containers. The current best upper bound is $1.3999$ by Hougardy \cite{H11} in 2011. \\

\noindent
\textbf{Online Square Packing. }
In 1997, Januszewski and Lassak \cite{JL97} considered the online variant in many dimensions. For two-dimensional square-into-square packing, their work showed a bound of $\alpha \geq 5/16$ by recursively dividing a unit square container into rectangles of aspect ratio $\sqrt{2}$. In 2008, Han et al. \cite{H08} used a similar approach to improve the lower bound to $1/3$.

Fekete and Hoffmann \cite{FH13} provided a new approach in 2013 which uses multi-directional shelves (horizontal and vertical) that are allocated dynamically. Using this technique, they were able to improve the lower bound further to $11/32$, the current best.

\subsection{Our Contributions}

We show a new lower bound of $\alpha \geq 3/8$ for online square-into-square packing, improving upon the previous result of $11/32$. The algorithm we designed uses dynamically allocated, multi-directional shelves like Fekete and Hoffmann \cite{FH13}. However, our approach differs in several important ways besides achieving a stronger lower bound.

We propose new criteria for packing shelves that allows us to make stronger claims about the density of individual shelves and simplifies the analysis. We also dramatically simplify the use of \textit{buffer regions}\footnote{Buffer regions are sections of the container which are used in the analysis to support claims made about other sections. For instance, let region $A$ and buffer region $B$ be two disjoint sections within the container. We may wish to guarantee that the total area of squares packed into region $A$ is $1/2$ the area of $A$. To do so, we may count both squares in $A$ as well as squares in $B$ which we have \textit{assigned} to $A$.} and describe a new way to partition the container into four primary shelves.  All of these improvements may be of independent interest for other 2-dimensional packing problems.

\subsection{Outline}
In section $2$, we discuss preliminaries including terminology and notation. In section $3$, we present our simple algorithm for packing squares into a unit square container. In section $4$, we analyze our algorithm and show that it successfully packs any set of squares with total area at most $3/8$.

\section{Preliminaries}

\subsection{Terminology}
In this algorithm, we define a \textit{shelf} $S$ as a subrectangle in the container with height $h$, length $\ell$, and packing ratio $r$, $0 < r < 1$. The \textit{packing ratio} is the ratio of the smallest possible height to the largest possible height of squares that can be packed into $S$. Any square packed into $S$ must have height $k$, $h \geq k > hr$.

When \textit{packing} a shelf $S$, squares are added side-by-side to $S$. In our algorithm, we also pack small vertical shelves into larger horizontal shelves. These vertical shelves are also packed side-by-side with squares and other vertical shelves.

\begin{figure*}
\begin{tikzpicture}[x=10cm,y=10cm]

\draw (0.5, 0.3) node {\large{Shelf Packing}};

\filldraw[fill=black!15!white]
(13/80, 0) rectangle (23/80,1/4)
(34/80, 0) rectangle (34/80 + 0.0577,  1/4)
(34/80 + 0.0577,  0) rectangle (34/80 + 0.0577 + 0.0577*0.58,  1/4);

\filldraw[fill=black!35!white] 

(0, 0) rectangle(13/80, 13/80) 

(15/80, 0) rectangle (23/80, 8/80) (17/80, 8/80) rectangle(23/80, 14/80) 

(23/80, 0) rectangle (34/80,11/80) 

(34/80 + 0.0577 - 0.035, 0) rectangle (34/80 + 0.0577,  0.035) 
(34/80 + 0.0577 - 0.045, 0.035) rectangle (34/80 + 0.0577,  0.08) 
(34/80 + 0.0577 - 0.04, 0.08) rectangle (34/80 + 0.0577,  0.12) 

(34/80 + 0.0577 + 0.0577*0.58 - 0.04*0.58, 0) rectangle (34/80 + 0.0577 + 0.0577*0.58,  0.04*0.58) 

(34/80 + 0.0577 + 0.0577*0.58, 0) rectangle (52/80 + 0.0577 + 0.0577*0.58,18/80) 
(52/80 + 0.0577 + 0.0577*0.58, 0) rectangle (68/80 + 0.0577 + 0.0577*0.58,14/80);







%
%


\draw (-0.025, 0) -- (-0.075, 0) (-0.025, 1/8) -- (-0.075, 1/8);
\draw[<->] (-0.0375, 0) -- (-0.0375, 1/8);
\draw (-0.07, 1/16) node {$hr$};

\draw (-0.1, 0) -- (-0.15, 0) (-0.1, 1/4) -- (-0.15, 1/4);
\draw[<->] (-0.1125, 0) -- (-0.1125, 1/4);
\draw (-0.1375, 1/8) node {$h$};


\draw (0, -0.025) -- (0, -0.075) (1, -0.025) -- (1, -0.075);
\draw[<->] (0, -0.0375) -- (1, -0.0375);
\draw (1/2, -0.07) node {$\ell$};

\draw[very thick] (0,0) rectangle (1,1/4);


%


\end{tikzpicture}
\begin{center}
\caption{Illustration of shelf packing with $r = 1/2$ for the horizontal shelf. Vertical shelves are designated with a light gray background and have different packing ratios.}
\label{fig:shelf}
\end{center}
\end{figure*}
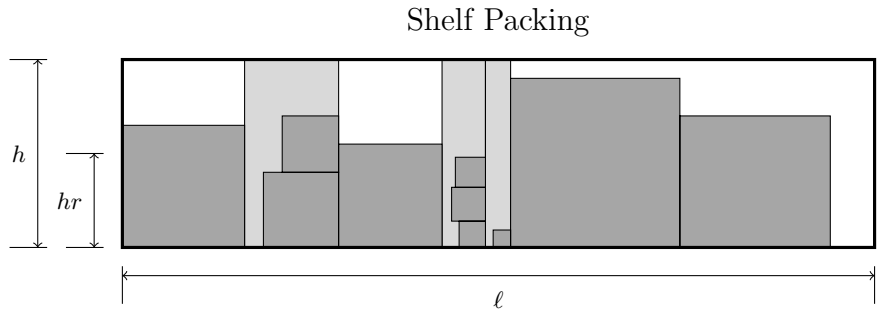

In addition, shelves may be considered \textit{open} or \textit{closed}. A shelf $S$ is initially considered open. As squares are added to $S$, we may receive a square $Q$ with height $h_Q$, $h \geq h_Q > hr$, such that packing $Q$ into $S$ would exceed the length of $S$. At this point, we say that $S$ is closed and never attempt to pack any future squares into $S$. The new square $Q$ is then packed into some other shelf.

In the analysis, we refer to the total area of all squares packed into a shelf or other section of the container as the \textit{covered area} of that section. We also refer to \textit{assigned covered area} or covered area \textit{assigned} to a section. This assigned covered area may include squares from elsewhere in the contained which have had their areas assigned to this section for the purpose of clearer analysis. A single square may have parts of its area assigned to different sections as long as the sum of those parts is less than or equal to the area of the square itself. We use the term \textit{density} to describe the ratio of the assigned covered area of section to the total area of that section.

Specifically when analyzing shelves, we refer to the \textit{used length} of a shelf to describe the length of that shelf which is occupied by squares or vertical shelves. In other words, this represents the length of the shelf which may not be overlapped by any future squares.

\subsection{Size Classes of Squares} \label{sec:class}
We divide possible input squares into four classes based on height: large, medium, small, and very small.
\begin {itemize}
\item {\textbf {Large: }} height $> \frac{1}{2}$
\item {\textbf {Medium: }} height $\leq \frac{1}{2}$ and $> \frac{1}{4}$
\item {\textbf {Small: }} height $\leq \frac{1}{4}$ and $> \frac{1}{8}$
\item {\textbf {Very Small: }} height $\leq \frac{1}{8}$
\end{itemize}

We also refer to small squares as class $c_0$ and subdivide the very small squares into subclasses $c_i$, $i \geq 1$. Squares in $c_i$ are packed into shelves with max height $h_i$ and packing ratio $r_i$. Naturally, squares in $c_i$ will have height $k_i$, $h_i \geq k_i > h_i r_i$. We use the notation $c_{j+}$ to refer to all $c_i$, $i \geq j$.

In our algorithm, we assign ratios as follows: $r_0 = 0.5$, $r_1 = 0.71$, $r_2 = 0.65$, and $r_{3+} = 0.58$. To account for all small and very small squares with height $\leq \frac{1}{4}$, we note that $h_0 = \frac{1}{4}$ and for all $i \geq 1$, we set $h_i = h_{i-1}r_{i-1}$. In section~\ref{sec:smallveryshelves}, we will show that our choices for $r_{1+}$ ensure certain desirable properties when we pack vertical shelves for very small squares into horizontal shelves for small squares. Most notably, closed vertical shelves for very small squares will have a density greater than $0.5$, which is the packing ratio for small squares.

\section{Algorithm}

For each square, we pack it according to a subroutine based on its size:
\begin {itemize}
\item {\textbf {Large: }} A large square is packed into the upper right corner of the container.
\item {\textbf {Medium: }} Medium squares are first packed from right-to-left along the bottom of the container until we receive a medium square that does not fit into this space. Then, they are packed from right-to-left along the top of the container.
\item {\textbf {Small: }} Small squares are packed in three parts. Initially, we pack the small buffer $b_0$ until it is closed. Then, we alternate packing the primary shelves $p_1$ and $p_2$, choosing the shelf with the shortest used length until both are closed. Finally, we alternate packing the primary shelves $p_3$ and $p_4$ in the same way.
\item {\textbf {Very Small: }} Very small squares are packed based on their subclass $c_i$, $i \geq 1$. For each subclass, we maintain exactly one open vertical shelf at any given time. Initially, all of these open vertical shelves are in the buffer regions, labeled $b_1$, $b_2$, $b_3$...  in Fig.~\ref{fig:Over}. Whenever a vertical shelf is closed, we open a new vertical shelf and the vertical shelf itself is ``packed'' into the container as if it is a small square. 
\end{itemize}

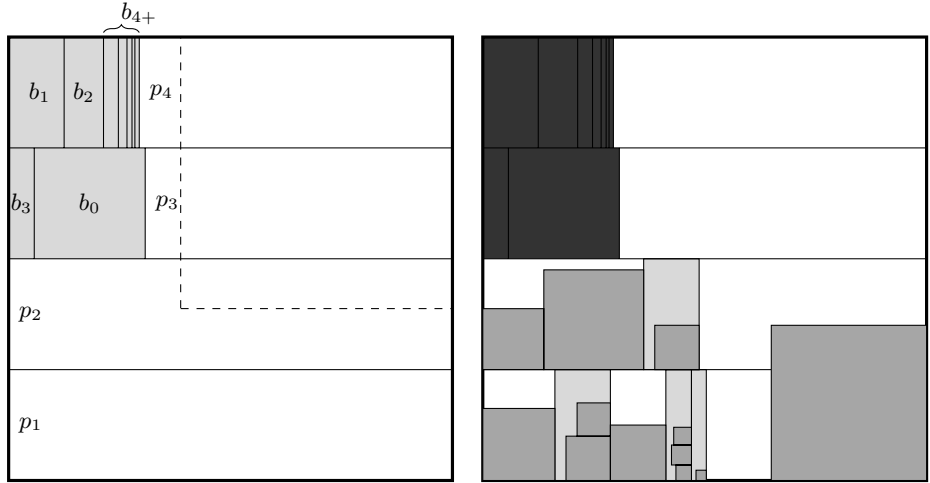
\begin{figure*}
\begin{tikzpicture}[x=5.9cm,y=5.9cm]

\fill[fill=black!15!white] 
(0, 3/4) rectangle(0.2943, 1) 
(0, 1/2) rectangle(0.0577 + 1/4, 3/4);

\draw
(0,1/4) -- (1,1/4) (0,2/4) -- (1,2/4) (0,3/4) -- (1,3/4)

(0.0577,1/2) -- (0.0577,3/4)
(0.0577 + 1/4, 1/2) -- (0.0577 + 1/4, 3/4)
(1/8,3/4) -- (1/8,1) 
(1/8 + 0.08875, 3/4) -- (1/8 + 0.08875, 1) 
(1/8 + 0.08875 + 0.0335, 3/4) -- (1/8 + 0.08875 + 0.0335, 1) 
(1/8 + 0.08875 + 0.0335 + 0.0194, 3/4) -- (1/8 + 0.08875 + 0.0335 + 0.0194, 1)
(1/8 + 0.08875 + 0.0335 + 0.0194 + 0.0113, 3/4) -- (1/8 + 0.08875 + 0.0335 + 0.0194 + 0.0113, 1) 
(1/8 + 0.08875 + 0.0335 + 0.0194 + 0.0113 + 0.0065, 3/4) -- (1/8 + 0.08875 + 0.0335 + 0.0194 + 0.0113 + 0.0065, 1) 
(0.2943, 3/4) -- (0.2943, 1);

\draw 
(0.05, 1/8) node {$p_1$} 
(0.05, 3/8) node {$p_2$}
(0.05 + 0.0577 + 1/4, 5/8) node {$p_3$} 
(0.05 + 0.2943, 7/8) node {$p_4$};

\draw 
(0.0577 + 1/8, 5/8) node {$b_0$} 
(0.0577/2, 5/8) node {$b_3$}
(0.07, 7/8) node {$b_1$}
(1/16 + 0.214/2, 7/8) node {$b_2$};
\draw [decorate,decoration={brace, amplitude=3pt}]
(1/8 + 0.08875, 1.01) -- (0.2943, 1.01) node [above] {$b_{4+}$};



\draw[dashed] (1-0.61237, 1-0.61237) rectangle(1,1);

\draw[very thick] (0,0) rectangle (1,1);

\end{tikzpicture} ~~~
\begin{tikzpicture}[x=5.9cm,y=5.9cm]

\fill[fill=black!80!white]
(0, 3/4) rectangle(0.2943, 1) 
(0, 1/2) rectangle(0.0577 + 1/4, 3/4);




\draw[very thick] (0,0) rectangle (1,1);


\filldraw[fill=black!15!white]
(13/80, 0) rectangle (23/80,1/4) 
(33/80, 0) rectangle (33/80 + 0.0577,  1/4) 
(33/80 + 0.0577,  0) rectangle (33/80 + 0.0577 + 0.0577*0.58,  1/4) 
(29/80, 1/4) rectangle (39/80,1/2); 

\filldraw[fill=black!35!white] (0, 0) rectangle(13/80, 13/80) 
(15/80, 0) rectangle (23/80, 8/80) (17/80, 8/80) rectangle(23/80, 14/80) 
(23/80, 0) rectangle (33/80,10/80) 
(33/80 + 0.0577 - 0.035, 0) rectangle (33/80 + 0.0577,  0.035) 
(33/80 + 0.0577 - 0.045, 0.035) rectangle (33/80 + 0.0577,  0.08) 
(33/80 + 0.0577 - 0.04, 0.08) rectangle (33/80 + 0.0577,  0.12) 
(33/80 + 0.0577 + 0.0577*0.58 - 0.04*0.58, 0) rectangle (33/80 + 0.0577 + 0.0577*0.58,  0.04*0.58); 

\filldraw[fill=black!35!white] (0, 1/4) rectangle(11/80, 31/80) 
(11/80, 1/4) rectangle (29/80, 38/80) 
(31/80, 1/4) rectangle (39/80, 28/80); 

\draw
(0,1/4) -- (1,1/4) (0,2/4) -- (1,2/4) (0,3/4) -- (1,3/4)

(0.0577,1/2) -- (0.0577,3/4)
(0.0577 + 1/4, 1/2) -- (0.0577 + 1/4, 3/4)
(1/8,3/4) -- (1/8,1) 
(1/8 + 0.08875, 3/4) -- (1/8 + 0.08875, 1) 
(1/8 + 0.08875 + 0.0335, 3/4) -- (1/8 + 0.08875 + 0.0335, 1) 
(1/8 + 0.08875 + 0.0335 + 0.0194, 3/4) -- (1/8 + 0.08875 + 0.0335 + 0.0194, 1)
(1/8 + 0.08875 + 0.0335 + 0.0194 + 0.0113, 3/4) -- (1/8 + 0.08875 + 0.0335 + 0.0194 + 0.0113, 1) 
(1/8 + 0.08875 + 0.0335 + 0.0194 + 0.0113 + 0.0065, 3/4) -- (1/8 + 0.08875 + 0.0335 + 0.0194 + 0.0113 + 0.0065, 1) 
(0.2943, 3/4) -- (0.2943, 1);


\filldraw[fill=black!35!white] (1 - 28/80, 0) rectangle (1, 28/80); 


\end{tikzpicture}
\begin{center}
\caption{Left: Overview of shelf structure. The four primary shelves are labeled $p_1$ through $p_4$.  The buffer region containing $b_0$, $b_1$, $b_2$, etc., is designated by the shaded area in the upper left corner. The region inside the dashed line in the upper left corner illustrates the size of the largest possible square we could receive. 
Right: Example of packing medium, small, and very small squares into the bottom half of the container}
\label{fig:Over}
\end{center}
\end{figure*}

\section{Analysis}

The analysis is broken down into several parts. Section~\ref{sec:shelf} covers a basic lemma and corollary about 2-dimensional shelf packing. Section~\ref{sec:largemed} shows that large and medium squares with total area at most $\leq 3/8$ can be packed by our algorithm. Sections~\ref{sec:smallveryshelves},~\ref{sec:buffer}, and~\ref{sec:smallverypack} are devoted to small and very small squares. The first two cover vertical shelves and buffer regions, respectively. Section~\ref{sec:smallverypack} shows that small and very small squares with total area at most $\leq 3/8$ can be packed by our algorithm. Section~\ref{sec:combine} combines the results of the previous sections to show our main result.
In this final section, we show that any set of squares with total area at most $\leq 3/8$, which is received in an online fashion, can be packed into a unit square container.

\subsection{Shelf Packing}\label{sec:shelf}

Recall that we define a shelf $S$ as a subrectangle in the container with height $h$, length $\ell$, and packing ratio $r$, $0 < r < 1$. Squares are added side-by-side to $S$ and any square with height $k$, $h \geq k > hr$, can be added to $S$ provided there is room. \\

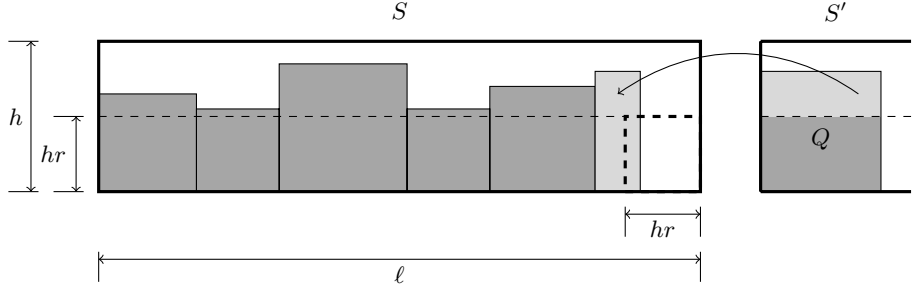
\begin{figure*}

\begin{tikzpicture}[x=8cm,y=8cm]

\filldraw[fill=black!35!white] 
(0, 0) rectangle(13/80, 13/80) 
(13/80, 0) rectangle(24/80,11/80)
(24/80, 0) rectangle(41/80,17/80) 
(41/80, 0) rectangle(52/80,11/80) 
(52/80, 0) rectangle(66/80,14/80);

\filldraw[fill=black!15!white] (66/80, 0) rectangle(72/80,16/80);

\draw[dashed] (0,1/8) -- (7/8,1/8);

\draw[very thick, dashed] (7/8,0) rectangle (1,1/8);

\draw (0.5, 0.3) node {$S$};


\draw[very thick] (0,0) rectangle (1,1/4);


\fill[fill=black!35!white, shift={(1.1,0)}] (0, 0) rectangle(16/80, 10/80);
\fill[fill=black!15!white, shift={(1.1,0)}] (0, 10/80) rectangle(16/80, 16/80);
\draw[shift={(1.1,0)}] (0, 0) rectangle(16/80, 16/80);

\draw[dashed, shift={(1.1,0)}] (0,1/8) -- (1/4,1/8);

\draw[very thick, shift={(1.1,0)}] (0,0) -- (1/4,0) (0,1/4) -- (1/4,1/4) (0,0) -- (0,1/4);

\draw (1.1 + 1/8, 0.3) node {$S'$};
\draw (1.1 + 8/80, 7/80) node {$Q$};

\draw[->] (1.1 + 13/80,13/80) to [out=145,in=35] (69/80, 13/80);


\draw (-0.025, 0) -- (-0.075, 0) (-0.025, 1/8) -- (-0.075, 1/8);
\draw[<->] (-0.0375, 0) -- (-0.0375, 1/8);
\draw (-0.075, 1/16) node {$hr$};

\draw (-0.1, 0) -- (-0.15, 0) (-0.1, 1/4) -- (-0.15, 1/4);
\draw[<->] (-0.1125, 0) -- (-0.1125, 1/4);
\draw (-0.1375, 1/8) node {$h$};

\draw (7/8, -0.025) -- (7/8, -0.075) (1, -0.025) -- (1, -0.075);
\draw[<->] (7/8, -0.0375) -- (1, -0.0375);
\draw (15/16, -0.0625) node {$hr$};

\draw (0, -0.1) -- (0, -0.15) (1, -0.1) -- (1, -0.15);
\draw[<->] (0, -0.1125) -- (1, -0.1125);
\draw (1/2, -0.1375) node {$\ell$};

\end{tikzpicture}
\begin{center}
\caption{Illustration of Lemma 1 and Corollary 1 for the case when $r = 1/2$. The upper portion of the square $Q$ is assigned to $S$, while the lower portion is assigned to $S'$.}
\label{fig:Lem1}
\end{center}
\end{figure*}

The following lemma is a generalization of a lemma due to Moon and Moser \cite{MM67}. Their lemma applies when $r = h/2$. Here, we find it useful to consider any $r \in (0, 1)$.

\begin{lemma}
\label{lemma:MM}
Let $S$ be a shelf with height $h$, length $\ell$ and packing ratio $r$, $0 < r < 1$, that is packed with a set $P$ of squares with height $\leq h$ and $> hr$. Let $Q$ be an additional square with height $h_Q$, $h \geq h_Q > hr$, that does not fit into $S$. Then the total area of all the squares packed into $S$ plus the area of $Q$ is greater than $\ell hr - (hr)^2 + h_Qhr$.
\end{lemma}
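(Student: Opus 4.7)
The plan is to bound the used length of the shelf from below using the fact that $Q$ fails to fit, then convert that length bound into an area bound by exploiting the packing ratio, and finally reconcile the natural expression we obtain with the slightly loosened form in the lemma statement via a square-completion inequality.

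First I would set up notation: let the squares packed into $S$ be $Q_1,\ldots,Q_n$ with heights $k_1,\ldots,k_n$, each satisfying $h\geq k_i>hr$. Since they are placed side by side in $S$, the used length is $\sum_{i=1}^{n} k_i$. The hypothesis that $Q$ does not fit in $S$ means precisely
\[
\sum_{i=1}^{n} k_i + h_Q > \ell, \qquad\text{so}\qquad \sum_{i=1}^{n} k_i > \ell - h_Q.
\]
Next I would convert this length inequality into an area inequality. Because $k_i>hr$ for every $i$, we have $k_i^2>hr\cdot k_i$, and summing gives
\[
\sum_{i=1}^{n} k_i^{2} \;>\; hr\sum_{i=1}^{n} k_i \;>\; hr(\ell - h_Q).
\]
Adding the area $h_Q^{2}$ of $Q$ yields
\[
\sum_{i=1}^{n} k_i^{2} + h_Q^{2} \;>\; \ell hr - h_Q\,hr + h_Q^{2}.
\]

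The only remaining task is to show this already exceeds $\ell hr-(hr)^2+h_Q\,hr$, the quantity claimed by the lemma. This reduces to the inequality $h_Q^{2}-2h_Q\,hr+(hr)^2\geq 0$, which is just $(h_Q-hr)^2\geq 0$ and therefore always holds. Chaining the strict and non-strict inequalities gives the lemma.

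I do not expect a serious obstacle here. The only subtle point is recognizing that the natural bound that falls out of the argument, $\ell hr - h_Q\,hr + h_Q^{2}$, is tighter than, and implies, the form stated in the lemma; the author presumably writes it in the weaker form because the term $(hr)^2$ is easier to manipulate uniformly across shelves in the later density calculations. It is also worth double-checking the boundary case $n=0$: if no squares have been packed, $\sum k_i=0$, and the assumption that $Q$ does not fit forces $h_Q>\ell$, so $h_Q^2 > \ell h_Q > \ell hr$, and the conclusion still follows directly.
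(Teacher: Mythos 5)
Your main argument is the same as the paper's: you bound the used length from below via the ``does not fit'' hypothesis, convert it to an area bound using $k_i > hr$, add the area of $Q$, and then complete the square to pass from the tight quantity $\ell hr - h_Qhr + h_Q^2$ to the stated form. The paper asserts the bound $hr(\ell - h_Q)$ on the area of $P$ directly; you derive it from $\sum k_i > \ell - h_Q$ and $k_i^2 > hr\,k_i$, which is a cleaner justification of the same step.

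One small flaw in your boundary-case remark: from $h_Q^2 > \ell h_Q > \ell hr$ you claim the conclusion ``follows directly,'' but $\ell hr < \ell hr - (hr)^2 + h_Q hr$ whenever $h_Q > hr$, so that chain does not reach the target. Fortunately the case $n=0$ with $Q$ not fitting requires $h_Q > \ell$, hence $h > \ell$, which never occurs for the shelves in this paper (they all satisfy $\ell \geq h$), so the lemma as used is unaffected; but the addendum as written does not establish what it claims.
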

\begin{proof}
Given that $Q$ does not fit into $S$, it must be the case that the set $P$ covers an area of at least $hr(\ell -h_Q) = \ell hr-h_Qhr$. The area of $Q$ is clearly $h_Q^2$. Combining the two, we get a covered area of at least,
\begin{align*}
\ell hr - h_Qhr+ h_Q^2 &= \ell hr - h_Qhr + h_Q^2 + (hr)^2 - (hr)^2 + h_Qhr - h_Qhr \\
&= \ell hr - (hr)^2 + h_Qhr + h_Q^2 - 2h_Qhr + (hr)^2 \\
&= \ell hr - (hr)^2 + h_Qhr + (h_Q - hr)^2 \\
&> \ell hr - (hr)^2 + h_Qhr &\because h_Q > hr
\end{align*}
\end{proof}

This leads to the following corollary which we will use to bound the covered area assigned to closed shelves.

\begin{corollary}
\label{cor:MM}
We can assign an area of $\ell hr - (hr)^2$ to a closed shelf $S$.
\end{corollary}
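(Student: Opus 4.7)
The plan is to derive Corollary~\ref{cor:MM} as a direct accounting consequence of Lemma~\ref{lemma:MM}, by carefully splitting the total area bound between ``area actually sitting in $S$'' and ``area borrowed from the blocking square $Q$.'' First I would fix the moment at which $S$ becomes closed: by definition this happens when some square $Q$ of height $h_Q \in (hr, h]$ arrives and fails to fit. Lemma~\ref{lemma:MM} already delivers the combined bound $\ell h r - (hr)^2 + h_Q h r$ on the covered area of $S$ plus the area of $Q$, so the whole task is to decide how much of this bound to credit to $S$.

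Next, I would observe that since $Q$ could not fit, the used length of $S$ just before $Q$ arrived must exceed $\ell - h_Q$. Combined with the fact that every square already packed into $S$ has side length strictly greater than $hr$, this shows that the covered area of $S$ alone is at least $hr(\ell - h_Q) = \ell h r - h_Q h r$. I would credit all of this to $S$.

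To reach the target $\ell h r - (hr)^2$, I would then borrow an additional $hr(h_Q - hr)$ from $Q$ and assign it to $S$. Adding the two contributions telescopes cleanly:
\[
hr(\ell - h_Q) + hr(h_Q - hr) \;=\; \ell h r - (hr)^2,
\]
which is exactly the claimed bound. Since $hr < h_Q$ we have $hr(h_Q - hr) < h_Q^2$, so this partial claim on $Q$ leaves strictly positive area of $Q$ unassigned, available to be credited to whatever shelf $Q$ ultimately joins. Geometrically this corresponds to the picture in Fig.~\ref{fig:Lem1}: the top horizontal strip of $Q$ is charged to the closed shelf $S$, while the bottom strip remains attached to $Q$ for the next shelf $S'$.

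I do not foresee any real obstacle: the corollary is essentially a one-line reorganization of Lemma~\ref{lemma:MM}. The only subtlety worth stating explicitly is the legality of the partial area charge on $Q$, and this is handled by the strict inequality $hr < h_Q$ inherited from the hypothesis of Lemma~\ref{lemma:MM}.
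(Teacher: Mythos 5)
Your proposal is correct and takes essentially the same approach as the paper: it splits the bound from Lemma~\ref{lemma:MM} by crediting $\ell h r - (hr)^2$ to the closed shelf $S$ and leaving the rest with $Q$ for its eventual shelf $S'$. You are a bit more explicit than the paper's proof in pinpointing exactly which slice of $Q$ (area $hr(h_Q - hr)$) is charged to $S$ and in verifying that the remainder $h_Q^2 - hr(h_Q - hr) \geq h_Q h r$ stays available, but the underlying accounting is identical.
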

\begin{proof}
By Lemma~\ref{lemma:MM} the covered area of $S$ plus the area of $Q$ is greater than $\ell hr - (hr)^2 +  h_Qhr$. Let $S'$ be the open shelf holding $Q$. We can assign the area $\ell hr - (hr)^2$ to $S$ and let $h_Qhr$ be assigned to $S'$.
\end{proof}

\subsection{Packing Large and Medium Squares}\label{sec:largemed}

We will first show that any input containing only large or only medium squares can be packed. Then we will address inputs containing both size classes. \\

\noindent
\textbf{Large Squares:  }
Large squares have height $> 1/2$ and so cover an area $> 1/4$. Because the total area to be covered is at most $3/8$, there can be at most one large square in the input. Additionally, notice that when the covered area of the container exceeds $1/8$, large squares are no longer possible. We will ensure there is room in the upper right corner for the largest possible square in the remaining input until the covered area exceeds $1/8$. The dashed line in Fig.~\ref{fig:Over} shows the initial space for a large square with an area of $3/8$. \\

\noindent
\textbf{Medium Squares:  }
Medium squares have height $\leq 1/2$ and $> 1/4$. As such, each must cover an area $> 1/16$ and there can be at most $5$ medium squares in the input. To analyze the packing of these squares, we may think of the combined space of $p_1$ and $p_2$ in Fig.~\ref{fig:Over} as the \textit{bottom shelf} for medium squares. This bottom shelf has length $1$, height $1/2$, and packing ratio $1/2$. Likewise, we may think of the combined space of $p_3$ and $p_4$ as the \textit{top shelf} with length $\geq 0.692$ (the length of $p_3$ as defined in section~\ref{sec:buffer} below), height $1/2$, and packing ratio $1/2$. Using this terminology, our algorithm first packs medium squares from right-to-left into the bottom shelf. Then, they are packed from right-to-left into the top shelf. 

The following lemma will show that this accommodates any input containing only medium squares without violating the buffer regions in the upper left.

\begin{lemma}
\label{lemma:med}
Any set of medium squares with total area $\leq 3/8$ can be packed into the bottom and top shelves.
\end{lemma}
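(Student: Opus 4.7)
The plan is to show that at most two medium squares are ever routed to the top shelf, and that any two so routed fit together into the top shelf's length $\ell_t \geq 0.6923$.

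Since each medium square has area greater than $1/16$, the area budget of $3/8$ permits at most five of them. Let their heights, in arrival order, be $h_1,\dots,h_n$, each in $(1/4,1/2]$, with $\sum_{i=1}^{n} h_i^2 \leq 3/8$. Because the bottom shelf has length $1$ and each $h_i \leq 1/2$, the first two arrivals trivially fit. I would next show that whenever $n \geq 4$, the third arrival also fits in the bottom shelf: using $h_i^2 > 1/16$ for $i \geq 4$ gives $h_1^2 + h_2^2 + h_3^2 \leq 3/8 - (n-3)/16 \leq 5/16$, so Cauchy--Schwarz yields $(h_1+h_2+h_3)^2 \leq 3 \cdot 5/16 = 15/16 < 1$.

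Consequently, for $n \leq 4$ at most one square ever reaches the top shelf, and for $n = 5$ at most two (namely $m_4$ and $m_5$). A single top-shelf square has length at most $1/2 < \ell_t$, so it fits. The interesting case is two squares on top, which forces $n = 5$. Combining $h_1^2 + h_2^2 + h_3^2 > 3/16$ with the area budget gives $h_4^2 + h_5^2 < 3/16$, and Cauchy--Schwarz then yields $h_4 + h_5 \leq \sqrt{2(h_4^2+h_5^2)} < \sqrt{3/8} < 0.613 < \ell_t$, so both top-shelf squares fit. Packing right-to-left in each shelf never encroaches on the buffer region in the upper left, since the used length stays within the shelf's length.

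The main obstacle is the bound $h_1 + h_2 + h_3 < 1$ when $n \geq 4$: it simultaneously exploits the area cap $3/8$ and the strict lower bound $h_i > 1/4$ on the additional squares, and the margin is genuinely tight (Cauchy--Schwarz gives only $\sqrt{15/16} \approx 0.968$). No deeper machinery---in particular, no invocation of Lemma~\ref{lemma:MM} or Corollary~\ref{cor:MM}---appears necessary; the two simple quadratic estimates above carry the entire argument.
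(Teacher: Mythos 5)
Your proof is correct, and it reaches the conclusion by a genuinely different route than the paper. The paper's argument is a proof by contradiction built on Lemma~\ref{lemma:MM}: it assumes the top shelf overflows, lower-bounds the area already in the bottom shelf via $1/4(1-h_Q)$ when the bottom closes, adds the top-shelf squares' areas, and shows the total exceeds $3/8$ in each of two cases (three squares sent to top; two squares sent to top). You instead argue \emph{directly} that the packing succeeds, working purely from height inequalities: Cauchy--Schwarz on $h_1,h_2,h_3$ (plus the pigeonhole observation that at most five mediums exist) shows the first three always fit in the bottom shelf when $n\geq 4$, so at most two squares ever reach the top; and a second Cauchy--Schwarz bound gives $h_4+h_5<\sqrt{3/8}<0.613$, comfortably within the top shelf's length $0.692$. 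Your approach is more elementary and self-contained (it avoids the shelf-packing lemma entirely, as you note), whereas the paper's approach reuses machinery it already has to build for the small/very-small classes; your two-square bound of $0.613$ is also materially slacker than the paper's $0.692$ threshold, which makes the constraint's non-tightness more visible. One small point worth stating explicitly for full rigor: the claim that a \emph{suffix} of the arrival sequence goes to the top shelf (so that when $n=5$ the top-shelf squares are exactly $m_4,m_5$ or just $m_5$) follows from the algorithm's rule that once a medium square fails to fit on the bottom, the bottom is permanently closed to mediums; you use this implicitly but it deserves a sentence.
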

\begin{proof} 
We prove by contradiction that up to five medium squares can be packed into these shelves without overlapping. In particular, we consider two cases: $3$ medium squares in the top shelf and $2$ medium squares in the top shelf. \\

\noindent
\textbf{Case 1:}
There are $3$ medium squares in the top shelf. In this case, there must be some square $Q$ with height $h_Q$ which closes the bottom shelf and is packed into the top shelf. Following $Q$, two more medium squares must be placed on the top shelf and all three of these squares must have height $> 1/4$. Then, the total area of the bottom shelf squares plus the three squares on the top shelf is greater than 
\begin{align*}
1/4(1 - h_Q) + h_Q^2  + (1/4)^2 + (1/4)^2 = 3/8 + h_Q^2 - (1/4 \cdot h_Q)
\end{align*}
which is $> 3/8$ for $h_Q > 1/4$, a contradiction. \\

\noindent
\textbf{Case 2:}
There are $2$ medium squares in the top shelf. In this case, there must be some square $Q$ with height $h_Q$ which closes the bottom shelf and is packed into the top shelf. Following $Q$, another medium square, $Q_2$ with height $h_{Q_2}$, must also be packed into the top shelf. In order to extend beyond the length of the top shelf, the combined height of these two squares must be $> 0.692 > 0.69$. So we can say that $h_{Q_2} \geq 0.69 - h_Q$. Then, the total area of the bottom shelf squares plus the two squares on the top shelf is greater than 

\begin{align*}
1/4(1 - h_Q) + h_Q^2  + (0.69 - h_Q)^2 	&= 0.25 - 0.25h_Q + h_Q^2 \\
								&~~~+ 0.4761 - 1.38h_Q + h_Q^2 \\
&= 0.7261 - 1.63h_Q + 2h_Q^2
\end{align*}
which has a minimum value $> 3/8$, a contradiction. \\
\end{proof}

\noindent
\textbf{Medium and Large Squares:  }
We use the following lemma to show that an input containing both medium and large squares will not be a problem.

\begin{lemma}
\label{lemma:medlarge}
For inputs with total area $\leq 3/8$, large and medium squares will never overlap.
\end{lemma}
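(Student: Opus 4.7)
The plan is to combine a counting step, a geometric reduction, and a one-line area estimate.

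First I would use the $3/8$ area bound to count how many large and medium squares can coexist in an admissible input. A large square has height $h_L>1/2$, hence area $>1/4$, and every medium square has area $>1/16$. Since $\tfrac{1}{4}+2\cdot\tfrac{1}{16}=\tfrac{3}{8}$ while the large-square inequality is strict, no admissible input can contain a large square together with two medium squares. Hence, whenever a large square is packed, at most one medium square accompanies it, and that medium square (regardless of the arrival order of the two) is placed by the algorithm in the rightmost slot of the bottom shelf.

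Next I would write down the two footprints explicitly. The large square occupies $[1-h_L,1]\times[1-h_L,1]$ and the medium square, if any, occupies $[1-h_M,1]\times[0,h_M]$ with $h_M>1/4$. Both rectangles share the right edge of the container, so their $x$-projections automatically overlap; thus the two footprints are disjoint if and only if their $y$-projections are disjoint, i.e. if and only if $h_L+h_M<1$. This reduces the entire lemma to the single numeric inequality $h_L+h_M<1$.

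The remaining step is a short area calculation. Assume for contradiction that $h_L+h_M\ge 1$; then $h_M\ge 1-h_L$, and
\[
h_L^2+h_M^2 \;\ge\; h_L^2+(1-h_L)^2 \;=\; 2h_L^2-2h_L+1.
\]
This quadratic in $h_L$ attains its minimum $1/2$ at $h_L=1/2$, so on $h_L>1/2$ it is strictly greater than $1/2$. But the area budget gives $h_L^2+h_M^2\le 3/8<1/2$, a contradiction.

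The main (rather mild) obstacle I anticipate is not the algebra but the geometric worry that, since $h_L>1/2$, the large square dips below the midline $y=1/2$ and in principle could collide with the bottom-right medium square; the area calculation above is precisely what rules out this collision, and the counting step is what ensures we never need to worry about a second medium square wrapping around into the top shelf.
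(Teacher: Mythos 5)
Your proof is correct and follows essentially the same route as the paper: both reduce the lemma to the observation that $h_L^2+h_M^2\le 3/8$ forces $h_L+h_M<1$, the paper phrasing this as $h_L+h_M\le 2\sqrt{3/8\cdot 1/2}=\sqrt{3}/2<1$ while you argue the contrapositive via minimizing $2h_L^2-2h_L+1$. Your version is in one respect more careful than the paper's own prose: the paper's proof says the medium square sits in the lower \emph{left} corner, but the algorithm packs medium squares right-to-left, so (as you correctly write) it sits in the lower \emph{right} corner and shares the container's right edge with the large square --- which is exactly why the height inequality is the right thing to check.
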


\begin{proof} 
We first note that any input containing only these two size classes must contain exactly $1$ large square packed into the upper right corner and $1$ medium square packed into the lower left corner. These two squares cannot overlap because the combined height of any two squares with total area $\frac{3}{8}$ cannot exceed $2 \sqrt{3/8 \cdot 1/2} \approx 0.87$ which is less than the height of the container.
\end{proof}

\subsection{Small and Very Small Shelves}\label{sec:smallveryshelves}

Refer back to section~\ref{sec:class} for an overview of how we subdivide the small and very small classes. Recall that each class $c_i$, $i \geq 0$, represents squares that are packed into shelves with max height $h_i$ and packing ratio $r_i$. It follows that squares in $c_i$ must have minimum height $> h_i r_i$. To account for all squares with height $\leq \frac{1}{4}$, we assign $h_0 = 1/4$ and for all $i \geq 1$, $h_i = h_{i-1}r_{i-1}$.

According to our algorithm, $c_0$ squares are packed from left to right into the four horizontal primary shelves ($p_1$ to $p_4$) with length $1$, height $1/4$, and packing ratio $1/2$. For all $k \geq 1$, $c_{k}$ squares are packed into vertical shelves of length $\frac{1}{4}$, height $h_k$, and packing ratio $r_k$. These vertical shelves are also added to the primary shelves from left to right. We further enforce a rule for vertical shelves that each class $c_k$ may have at most one open shelf at any given time.

A new vertical shelf is opened only when another must be closed. The buffer regions $b_0$, $b_1$, $b_2$, ..., in the upper left corner of the container are packed first and allow us to assign extra covered area when we need it.

As stated in our algorithm, the packing of small squares happens in three parts. First, we fill the buffer regions. Second, we alternate filling $p_1$ and $p_2$. Third, we alternate filling $p_3$ and $p_4$. Before showing that we can pack small and very small squares, we will examine vertical and primary shelves in greater detail. \\

\noindent
\textbf{Bounding Density of Vertical Shelves:  }
First, we show how heights and ratios are chosen for small and very small classes. For primary shelves with height $h_0$, one goal is to ensure that any used length $\ell_u$ can be assigned a covered area of at least $\frac{h_0\ell_u}{2}$. In other words, the used length of a primary shelf will have a density of $1/2$.

If we choose $r_0$ to be $\frac{1}{2}$, any used section of a small shelf containing only $c_0$ squares will clearly have half of its area covered. Naturally, it would be useful if closed vertical shelves supplied us with the same guarantee. To accomplish this, we carefully choose values for $r_1$, $r_2$, and $r_{3+}$. \\

We first consider $c_1$. Vertical shelves of this class have height $h_1$ and length $\ell = h_0 = 2h_1$. In this special case, we show that any packing ratio $r_1 \geq \sqrt{1/2}$ suffices.

\begin{lemma}
\label{lemma:c1}
Let $S$ be a shelf with height $h$, length $\ell$ and packing ratio $r$, $0 < r < 1$. If $\ell = 2h$, we can choose a packing ratio $r \geq \sqrt{1/2}$, such that the covered area assigned to $S$ is at least $\frac{\ell h}{2}$ when $S$ is closed.
\end{lemma}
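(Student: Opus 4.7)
The plan is to exploit the very restrictive geometry implied by $\ell = 2h$: this forces the shelf to hold a small, constant number of squares, and the packing ratio $r \geq \sqrt{1/2}$ will be exactly what is needed to make each of those squares large enough. No cross-shelf borrowing or appeal to Lemma~\ref{lemma:MM}/Corollary~\ref{cor:MM} should be necessary; the entire argument will be a direct area computation on the squares physically inside $S$.

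First, I would establish a ``capacity'' claim: for $r \geq \sqrt{1/2}$, at most two squares fit in $S$. Any packed square has side strictly greater than $hr$, so three packed squares would require used length strictly greater than $3hr$; for this to be possible in a shelf of length $\ell = 2h$, we would need $3hr < 2h$, i.e.\ $r < 2/3$. Since $\sqrt{1/2} > 2/3$, choosing $r \geq \sqrt{1/2}$ rules this out immediately.

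Next, I would show that any closed shelf must contain at least two packed squares. If $S$ held only one square, of side $k_1 \leq h$, then any legal incoming square $Q$ (with $h_Q \leq h$) would satisfy $k_1 + h_Q \leq 2h = \ell$ and therefore fit, contradicting the assumption that $Q$ closes $S$. Combined with the capacity claim, a closed shelf $S$ contains exactly two packed squares, call them $k_1$ and $k_2$, each with side strictly greater than $hr$.

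The last step is a one-line calculation: the total covered area inside $S$ satisfies
\begin{align*}
k_1^2 + k_2^2 \;>\; 2(hr)^2 \;=\; 2h^2 r^2 \;\geq\; h^2 \;=\; \frac{\ell h}{2},
\end{align*}
where the final inequality uses $r \geq \sqrt{1/2}$. Because both squares lie physically inside $S$, this entire area is already assigned to $S$ without any bookkeeping, completing the proof. The only place I expect the argument to feel delicate is verifying that the two geometric observations (at most two squares fit, and any closed shelf must have two) simultaneously hold at the chosen threshold $r = \sqrt{1/2}$; the arithmetic inequality $\sqrt{1/2} > 2/3$ is precisely what makes both claims compatible, and it is also the precise value that makes $2r^2 \geq 1$ tight.
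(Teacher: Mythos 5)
Your proof is correct and follows essentially the same approach as the paper: observe that $r \geq \sqrt{1/2} > 2/3$ forces a closed shelf of length $\ell = 2h$ to contain exactly two squares, then bound the covered area below by $2(hr)^2 \geq h^2 = \ell h/2$. You simply spell out the justification for the ``exactly two squares'' claim in more detail than the paper, which states it without argument.
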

\begin{proof}
Notice that if $\ell = 2h$ and $r > 2/3$, any closed shelf $S$ must contain exactly two squares. Then if $r \geq \sqrt{1/2} > 2/3$, the minimum packing density of a closed shelf $S$ is
\begin{align*}
\frac{2(hr)^2}{\ell h} \geq \frac{2h^2(\sqrt{1/2})^2}{2h^2} = \frac{1}{2}
\end{align*}
Then, for any $r \geq \sqrt{1/2}$, the total covered area is at least $\frac{\ell h}{2}$. \\
\end{proof}

For $c_{2+}$, we can rely on a weaker, more general claim because shelves have length $\ell \geq (2 + \epsilon)h$ for some $\epsilon > 0$.

\begin{lemma}
\label{lemma:c2}
Let $S$ be a shelf with height $h$, length $\ell \geq (2 + \epsilon)h$ and packing ratio $r$, $0 < r < 1$. We can choose a packing ratio $r$ satisfying $r - \frac{hr^2}{\ell} \geq \frac{1}{2}$, such that the covered area assigned to $S$ is at least $\frac{\ell h}{2}$ when $S$ is closed.
\end{lemma}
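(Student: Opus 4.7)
The plan is to reduce the statement directly to Corollary~\ref{cor:MM}, which already gives a lower bound on the covered area assignable to any closed shelf. Since the hypothesis of the lemma is an inequality on $r$ that looks suspiciously like what you would get by dividing the Corollary's bound by $\ell h$, I expect almost everything to fall out algebraically.

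Concretely, I would begin by applying Corollary~\ref{cor:MM} to the closed shelf $S$: this immediately yields an assigned covered area of at least
\[
\ell h r - (hr)^2 \;=\; \ell h\Bigl(r - \frac{h r^2}{\ell}\Bigr).
\]
The desired conclusion is that this quantity is at least $\ell h / 2$, which after dividing both sides by the positive quantity $\ell h$ is exactly the hypothesis $r - \frac{h r^2}{\ell} \geq \tfrac{1}{2}$ imposed on $r$. So the bound on assigned area is immediate from Corollary~\ref{cor:MM} together with the chosen $r$.

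The only thing that really needs a sanity check is that such an $r\in(0,1)$ actually exists under the hypothesis $\ell \geq (2+\epsilon)h$, i.e.\ that the lemma is not vacuous. I would verify this by substituting $r = 1$ into the function $f(r) = r - hr^2/\ell$: using $h/\ell \leq 1/(2+\epsilon)$ one gets $f(1) \geq 1 - 1/(2+\epsilon) = (1+\epsilon)/(2+\epsilon) > 1/2$, so by continuity there is an open interval of admissible values of $r$ strictly less than $1$. This is also how the particular choices $r_2 = 0.65$ and $r_{3+} = 0.58$ listed in Section~\ref{sec:class} get justified when combined with the appropriate ratios $h/\ell$ for $c_{2+}$ vertical shelves.

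The main obstacle, such as it is, is not analytic but bookkeeping: one must make sure the lemma is applied with the correct identification of $(h,\ell,r)$ for each class $c_k$ with $k\geq 2$ (where $\ell = h_0 = 1/4$ and $h = h_k$), so that $\ell/h = h_0/h_k$ really exceeds $2$ by a definite margin and the inequality $r_k - h_k r_k^2/\ell \geq 1/2$ can indeed be verified for the stated numerical choices. Once the application of Corollary~\ref{cor:MM} is set up, the proof is a one-line rearrangement.
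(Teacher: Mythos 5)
Your proof is correct and follows essentially the same route as the paper: apply Corollary~\ref{cor:MM} to get the assigned area $\ell h r - (hr)^2 = \ell h\bigl(r - hr^2/\ell\bigr)$, then invoke the hypothesis on $r$ to conclude. The one small addition you make — checking $f(1) = 1 - h/\ell \geq (1+\epsilon)/(2+\epsilon) > 1/2$ and invoking continuity to show an admissible $r < 1$ exists — is a cleaner justification of the existence claim than the paper's brief assertion that ``we can choose an $r < 1$,'' but it is not a different argument.
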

\begin{proof}
If $\ell \geq (2 + \epsilon)h$, then $r - \frac{hr^2}{\ell} \geq r - \frac{r^2}{2 + \epsilon}$ and we can choose an $r < 1$ such that $r - \frac{hr^2}{\ell} \geq \frac{1}{2}$. Multiplying through by $\ell h$ gives us $\ell hr - (hr)^2 \geq \frac{\ell h}{2}$. Using Corollary~\ref{cor:MM}, we can assign the area $\ell hr - (hr)^2$ to $S$ when it closes. So $S$ has been assigned a covered area of at least $\frac{\ell h}{2}$ when it is closed. \\
\end{proof}

\noindent
\textbf{Summary of Heights, Ratios, and Packing Densities:  } \\

\begin{tabular}{|c|c|c|}
  \hline
  Height & Ratio & Packing Density \\ \hline
  $h_0 = 0.25$ & $r_0 = 0.5$ & $> 0.5$ \\ \hline
  $h_1 = 0.125$ & $r_1 = 0.71$ & $0.71^2 > 0.5$ \\ \hline
  $h_2 = 0.08875$ & $r_2 = 0.65$ & $r - \frac{hr^2}{l} = 0.65 - \frac{0.08875*0.65^2}{0.25}  > 0.5$ \\ \hline
  $h_3 \approx 0.0577$ & $r_{3+} = 0.58$ & $r - \frac{hr^2}{l} > 0.58 - \frac{0.0577*0.58^2}{0.25}  > 0.5$ \\
  \hline
\end{tabular} \\ \\

Notice that $0.58$ is acceptable for all $r_{3+}$ since a decrease in height can only increase the density shown by Lemma~\ref{lemma:c2}.

\begin{lemma}
\label{lemma:vert}
Corollary~\ref{cor:MM} can be extended to primary horizontal shelves if all vertical shelves packed into them are closed.
\end{lemma}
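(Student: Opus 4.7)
The plan is to mimic the proof of Corollary~\ref{cor:MM} almost verbatim, observing that its underlying Moon--Moser-style argument rests on just two facts about the contents of the shelf: every packed item contributes at least $hr$ units of (assigned) area per unit of used horizontal length, and the trigger item that fails to fit has width at most $h$. I would verify that both facts still hold when the contents of a primary shelf are a mixture of $c_0$ squares and closed vertical shelves being packed as if they were small squares.

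First I would unpack the constants. A primary horizontal shelf has $h=h_0=1/4$ and $r=r_0=1/2$, so $hr=1/8$. A $c_0$ square of side $k\in(hr,h]$ occupies width $k$ and has area $k^2\ge(hr)k$, yielding $\ge hr$ area per unit of used width. A closed vertical shelf of class $c_k$ with $k\ge 1$ occupies width $h_k$ inside the primary shelf and, by Lemmas~\ref{lemma:c1} and~\ref{lemma:c2}, carries assigned area at least $\tfrac12\cdot\tfrac14\cdot h_k=hr\cdot h_k$---again $\ge hr$ per unit of used width. Summing over the items in $S$, the total assigned area inside $S$ is at least $hr$ times the used length of $S$.

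Next I would split on the type of trigger $Q$ that closes the primary shelf $S$. If $Q$ is a $c_0$ square, then $h_Q\in(hr,h]$, and the computation in the proof of Lemma~\ref{lemma:MM} goes through word-for-word once ``covered area of $P$'' is reinterpreted as ``assigned covered area of $P$''. As in Corollary~\ref{cor:MM}, I would then charge $\ell hr-(hr)^2$ to $S$ and the leftover $h_Q hr$ to whichever shelf ultimately holds $Q$. If instead $Q$ is a closed $c_k$ vertical shelf ($k\ge 1$) being packed as a small square, its width is $h_k\le h_1=hr$, so the used length of $S$ already exceeds $\ell-hr$; the assigned area inside $S$ alone is strictly greater than $hr(\ell-hr)=\ell hr-(hr)^2$, and no credit from $Q$ itself is needed.

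The delicate point I would verify carefully is that no area gets double-counted. The assigned area of a closed vertical shelf nested inside $S$ may include a contribution drawn from a very-small trigger square that physically resides in another (open) vertical shelf elsewhere in the container. However, Corollary~\ref{cor:MM} already partitions that trigger's own area between the inner closed shelf and its successor open shelf, so the portion pulled into $S$'s budget through the nested closed shelf is disjoint from anything charged to the successor or to any other part of the container. Modulo this bookkeeping, the lemma follows from Corollary~\ref{cor:MM} together with the two-case split on the type of closing trigger.
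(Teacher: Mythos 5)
Your proposal matches the paper's proof essentially step for step: invoke Lemmas~\ref{lemma:c1} and~\ref{lemma:c2} to get density at least $1/2$ for every closed vertical shelf, then split on whether the trigger $Q$ is a $c_0$ square (apply Lemma~\ref{lemma:MM}/Corollary~\ref{cor:MM} verbatim) or a vertical shelf (use $h_Q \le h_1 = h_0 r_0$ to conclude the used length already exceeds $\ell - hr$, so the in-shelf area alone exceeds $\ell hr - (hr)^2$). Your added remark on non-double-counting of charges from a closed vertical shelf's own trigger is a reasonable bookkeeping observation that the paper leaves implicit, but the argument itself is the same.
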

\begin{proof}
Using Lemmas~\ref{lemma:c1} and~\ref{lemma:c2}, we can now say our packing ratios ensure that any closed vertical shelf packed into a primary horizontal shelf has been assigned a covered area greater than half of the area it uses. 

As in Lemma~\ref{lemma:MM}, let $S$ be a primary shelf packed with a set $P$ of some combination of small squares and vertical shelves. Let $Q$ be an additional square or vertical shelf that does not fit into $S$. Notice that $P$ must still cover an area of $hr(\ell -h_Q) = \ell hr-h_Qhr$

Clearly, if $Q$ is a small square, the proof in Lemma~\ref{lemma:MM} applies. Otherwise, if $Q$ is a vertical shelf, consider the fact that the largest vertical shelf has the height $h_1 = 1/8$. If such a shelf causes $S$ to close, then $\ell - h_Q > 7/8$ and $P$ must cover an area of at least $hr(7/8) = \ell hr - (hr)^2$
\end{proof}

\subsection{Buffer Regions}\label{sec:buffer}

\noindent
\textbf{Buffer Region for Very Small Squares ($b_{1+}$):  }
This region includes a vertical shelf for each subclass $c_k$, $k \geq 1$. Squares of these subclasses must fill their $b_k$ shelf before being packed elsewhere. However, we never assign covered area to a buffer region shelf. Instead, for every closed vertical shelf $S_k$ of $c_k$ squares in $b_k$, we assign its covered area to the open vertical shelf $S_k'$ elsewhere in the container. This allows us to treat all vertical shelves outside of the buffer region as closed shelves when calculating the assigned covered area of the primary shelves they are contained in. \\

\noindent
\textbf{Buffer Region for Small Squares ($b_0$):  }
This is the initial packing region for $c_0$ squares and $c_{1+}$ squares with closed shelves in $b_{1+}$. Covered area in $b_0$ will be assigned elsewhere in the same way as the $b_{1+}$ buffers. The length of $b_0$ is $1/4 = h_0$.

By Lemma~\ref{lemma:vert}, if $b_0$ receives only $c_0$ squares, we can guarantee an assigned covered area of at least $\ell hr - (hr)^2 = h_0^2r_0 - (h_0r_0)^2 =1/64 = (h_0r_0)^2$.

The consequence is that we have $(h_0r_0)^2$ of extra covered area to assign when packing $p_1$ and $p_2$. Additionally, the packing of small and very small squares into $p_3$ will begin in the unused portion of $b_0$. This ensures that $b_0$ is fully utilized and gives another $1/64 = (h_0r_0)^2$ to assign when packing such squares into $p_3$ and $p_4$. \\

\noindent
\textbf{Lengths of Buffer Regions and Primary Shelves 3 and 4:  }
The length of $b_0$ and $b_3$ is $< 0.308$. The length of $b_1$, $b_2$, and $b_{4+}$ is $< 0.294$ and can be computed as follows:
\begin{align*}
\sum_{i=1}^\infty h_i - h_3 &= h_1 + h_2 + \sum_{i=0}^\infty h_40.58^i &\because r_{3+} = 0.58 \\
&= h_0(r_0 + r_0r_1 + \sum_{i=0}^\infty r_0r_1r_2r_30.58^i) \\
&= h_0(r_0(1 + r_1(1 + r_2r_3 \sum_{i=0}^\infty 0.58^i))) \\
&= 0.25(0.5(1 + 0.71(1 + 0.65 \cdot 0.58 \sum_{i=0}^\infty 0.58^i))) \\
&= 0.25(0.5(1 + 0.71(1 + 0.65 \cdot 0.58 \cdot \frac{1}{1 - 0.58}))) \\
&< 0.294 \\
\end{align*}

 Therefore, the primary shelves $p_3$ and $p_4$ have lengths $\geq 0.692$ and $\geq 0.706$, respectively. \\

\subsection{Packing Small and Very Small Squares}\label{sec:smallverypack}
After initially filling $b_0$ and potentially some or all of $b_{1+}$, we continue packing in primary shelves $p_1$ and $p_2$ in alternating fashion. Each time we receive a $c_0$ square or open a new vertical shelf, we do so in whichever of these two primary shelves has the shortest used area. When $p_1$ and $p_2$ are closed, we repeat this process in $p_3$ and $p_4$. \\

\noindent
\textbf{Assigned Covered Area of Open Primary Shelves:}
In analyzing potential conflicts, it will be useful to know how the assigned covered area of two open shelves packed in alternating fashion compares to the longest used length of either at any given time.

\begin{lemma}
\label{lemma:lu}
Let $S^2$ be a set of two primary shelves with each shelf having height $h_0$, and packing ratio $r_0$. Let both shelves be open and packed with small squares as stated previously and assigned additional area from $b_0$ and $b_{1+}$. Let $\ell_u$ be the length of the longest used length. Then the covered area assigned to $S^2$ is at least $h_0 \ell_u$.
\end{lemma}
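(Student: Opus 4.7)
The plan is to split the assigned area of $S^2$ into three contributions: the per-column contributions of the items on each shelf, the $(h_0 r_0)^2$ surplus from $b_0$ noted in section~\ref{sec:buffer}, and a ``wide end-item'' bonus that appears precisely when the two used lengths diverge.

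Let $a \leq b$ be the used lengths of the two shelves in $S^2$, so $\ell_u = b$. First I would argue that every item placed in $p_1$ or $p_2$, whether a $c_0$ square or a vertical shelf, has its column covered with density at least $1/2$. For a $c_0$ square of height $w$, the column area is $w h_0$ and the square's area is $w^2 \geq w h_0/2$ since $w > h_0 r_0 = h_0/2$. For a vertical shelf of class $c_k$, $k \geq 1$, the $b_{1+}$ bookkeeping from section~\ref{sec:buffer} lets us treat it as closed, and then Lemmas~\ref{lemma:c1} and~\ref{lemma:c2} (with the ratios tabulated in section~\ref{sec:smallveryshelves}) guarantee assigned area at least $h_0 h_k/2$ in its slot. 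Summing columns over both shelves yields total assigned area at least $h_0(a+b)/2$; adding the $h_0^2/4$ surplus from $b_0$ gives at least $h_0(a+b)/2 + h_0^2/4$. This already meets the target $h_0 b$ whenever $b - a \leq h_0/2$.

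The remaining case is $b - a > h_0/2$. I would establish the invariant $b - a \leq w_{\text{end}}$, where $w_{\text{end}}$ is the width of the item sitting at the outer end of the currently longer shelf, by a simple induction on the sequence of additions to $p_1,p_2$: the alternation rule places the new item of width $w$ on the shorter shelf, and case-splitting on whether the shorter shelf remains shorter or overtakes shows the invariant is preserved (with $w_{\text{end}}$ updated to the new item in the overtaking case). If $b - a > h_0/2$ the invariant forces $w_{\text{end}} > h_0/2$, so the end-item must be a $c_0$ square, since every vertical shelf has width at most $h_1 = h_0/2$. That square contributes $w_{\text{end}}^2$ to its column rather than the baseline $w_{\text{end}} h_0/2$, adding an excess of $w_{\text{end}}(w_{\text{end}} - h_0/2)$. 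Combining everything, the total is at least $h_0(a+b)/2 + h_0^2/4 + w_{\text{end}}(w_{\text{end}} - h_0/2)$, and using $b - a \leq w_{\text{end}}$ the inequality reduces to $h_0^2/4 + w_{\text{end}}^2 - h_0 w_{\text{end}} \geq 0$, i.e.\ $(w_{\text{end}} - h_0/2)^2 \geq 0$, which is immediate.

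The main obstacle I expect is cleanly stating and verifying the invariant $b - a \leq w_{\text{end}}$, since $w_{\text{end}}$ changes whenever the shorter shelf overtakes, and items of highly varying widths (from narrow $c_{k+}$ shelves to full-width $c_0$ squares) must be interleaved coherently. Once that invariant is in hand, the per-column density argument is standard and the remaining arithmetic collapses to a completed square.
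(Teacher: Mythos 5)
Your proof is correct, and it rests on the same ingredients as the paper's (density $\geq 1/2$ per used column via Lemma~\ref{lemma:vert}, the $(h_0r_0)^2 = h_0^2/4$ surplus from $b_0$, and the fact that the alternation rule bounds the imbalance between the two shelves by the width of the last-placed item), but your bookkeeping is organized differently and is arguably more rigorous. The paper only works through the base case $\ell_1=\ell_2=0$ in detail, splitting a newly placed $c_0$ square of height $h_Q$ into the four pieces $(h_0r_0)^2$, $h_0r_0(h_Q - h_0r_0)$, $h_0r_0(h_Q - h_0r_0)$, $(h_Q-h_0r_0)^2$ and distributing $h_0r_0h_Q$ symmetrically to each shelf, then asserts ``it follows'' that adding to the shorter shelf preserves the invariant when $\ell_1\neq\ell_2$ without spelling the inductive step out. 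You instead keep each item's area in its own column, introduce the explicit invariant $b-a\leq w_{\mathrm{end}}$ (which is exactly what the alternation rule buys, and is easy to verify by the case split you describe), split on whether $b-a\leq h_0/2$, and in the hard case observe that $w_{\mathrm{end}}>h_0/2$ forces the end item to be a $c_0$ square whose excess area $w_{\mathrm{end}}(w_{\mathrm{end}}-h_0/2)$, together with the $b_0$ surplus, collapses the required inequality to $(w_{\mathrm{end}}-h_0/2)^2\geq 0$. The two routes are equivalent in substance; yours gives a self-contained static argument that fills in the inductive step the paper leaves implicit, while the paper's per-item reassignment makes the ``$b_0$ only needs to be spent once'' accounting slightly more visible.
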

\begin{proof}
Let $p_1$ and $p_2$ be two shelves with used lengths $\ell_1$ and $\ell_2$ respectively. We will show that when $\ell_1 = \ell_2 = \ell_u = 0$, adding a square or vertical shelf to one of them gives a covered area of at least $h_0 \ell_u$. It follows that when $\ell_1 \neq \ell_2$, adding to the shorter shelf will not break this invariant.

Without loss of generality, let the new square be added to $p_1$. If this new square adds a vertical column, the analyses is simple. The covered area assigned to $p_1$ will be $h_0r_0 \ell_u = \frac{h_0 \ell_u}{2}$, but the largest vertical column has height $h_1$. So in this case $\ell_u \leq \ell_1 + h_1 = \ell_2 + h_0r_0$. By assigning, $(h_0r_0)^2$ from $b_0$ to $p_2$, we see that $p_2$ contributes an area of $h_0r_0 \ell_2 + (h_0r_0)^2 = h_0r_0(\ell_2 + h_0r_0) \geq h_0r_0\ell_u$. Adding the shelves together gives a total covered area of at least $h_0 \ell_u$.

On the other hand, suppose the new square added to $p_1$ is in $c_0$. We'll call this new square $Q$ with height $h_Q$. In this case, we can split the area of $Q$ among the two shelves and use $b_0$ to fill in the rest. Because $h_Q > h_0r_0$, we can divide it into four parts: $(h_0r_0)^2$, $h_0r_0(h_Q - h_0r_0)$, $h_0r_0(h_Q - h_0r_0)$ and $(h_Q - h_0r_0)^2$. The first two parts can be assigned to $p_1$, adding the necessary $h_0r_0h_Q$ to its area. The third part along with $(h_0r_0)^2$ from $b_0$ can be assigned to $p_2$, adding $h_0r_0h_Q$ to its area as well. Since $\ell_u = \ell_1 + h_Q$ this brings the assigned covered area to $h_0 \ell_u$.
\end{proof}

\noindent
\textbf{Assigned Covered Area of Closed Primary Shelves:  }
Now we consider what happens when two alternately packed primary shelves are closed. In this algorithm, when we receive a small square which fits into neither shelf, we close both shelves.

\begin{lemma}
\label{lemma:2closed}
A pair of closed primary shelves that were alternately packed has been assigned a covered area of at least $2(\ell h_0r_0 - (h_0r_0)^2)$.
\end{lemma}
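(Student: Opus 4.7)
The plan is to run a Moon--Moser style area bound on both shelves simultaneously, exploiting the single object $Q$---a small square or a newly opened vertical shelf---whose arrival closes the entire pair. The key combinatorial fact is that, because $Q$ fits in neither shelf, both used lengths satisfy $\ell_u^{(i)} > \ell - h_Q$, so each shelf is already quite full when it closes.

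First I would invoke Lemma~\ref{lemma:vert}, together with the buffer bookkeeping of Section~\ref{sec:buffer} that lets us treat every vertical shelf residing in a primary shelf as effectively closed with density at least $1/2$, to conclude that the physical contents of $S_i$ already contribute assigned area at least $h_0 r_0 \ell_u^{(i)} > h_0 r_0 (\ell - h_Q)$. Summing over the pair yields
\[ |S_1| + |S_2| > 2 \ell h_0 r_0 - 2 h_Q h_0 r_0, \]
which is short of the target $2(\ell h_0 r_0 - (h_0 r_0)^2)$ by at most $2 h_0 r_0 (h_Q - h_0 r_0)$.

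Next I would dispatch $Q$ by cases. If $Q$ is a vertical shelf, then $h_Q \leq h_1 = h_0 r_0$ by the class definitions in Section~\ref{sec:class}, so the shortfall is non-positive and the claim follows immediately. If instead $Q$ is a small square, then $h_Q \in (h_0 r_0, h_0]$ and I would assign exactly $h_0 r_0 (h_Q - h_0 r_0)$ of $Q$'s area to each shelf in the pair, bringing each up to the desired $\ell h_0 r_0 - (h_0 r_0)^2$. The feasibility check reduces to $h_Q^2 \geq 2 h_0 r_0 (h_Q - h_0 r_0)$, equivalently $(h_Q - h_0 r_0)^2 + (h_0 r_0)^2 \geq 0$, which is immediate.

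The hard part will be resisting the temptation to apply Corollary~\ref{cor:MM} twice---once per shelf. Doing so would deposit $h_Q h_0 r_0$ of $Q$'s area into $Q$'s destination shelf for each of $S_1$ and $S_2$, requiring a total of $2 h_Q h_0 r_0 > h_Q^2$ whenever $h_Q < h_0$ and thereby overdrawing $Q$. The fix is to bypass the corollary and charge $Q$ only the precise shortfall needed by each shelf; the same quadratic as above confirms this stays within $Q$'s budget and in fact leaves a residue of $(h_Q - h_0 r_0)^2 + (h_0 r_0)^2$ of $Q$'s area for the downstream accounting of the shelf where $Q$ ultimately resides.
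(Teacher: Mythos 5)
Your proof of the stated bound is valid in isolation: each shelf's physical contents cover at least $h_0r_0(\ell - h_Q)$, the per-shelf shortfall $h_Qh_0r_0 - (h_0r_0)^2 = h_0r_0(h_Q - h_0r_0)$ is correctly computed, the vertical-shelf case ($h_Q \leq h_1 = h_0r_0$) is handled cleanly, and the feasibility check $h_Q^2 \geq 2h_0r_0(h_Q - h_0r_0)$ reduces exactly to $(h_Q - h_0r_0)^2 + (h_0r_0)^2 \geq 0$. Your diagnosis of why Corollary~\ref{cor:MM} cannot be applied naively to both shelves is also accurate.

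However, your allocation diverges from the paper's in a way that does not compose with the downstream analysis. The paper applies Corollary~\ref{cor:MM} (via Lemma~\ref{lemma:vert}) to exactly one shelf of the pair -- so $Q$ is charged at most $h_Qh_0r_0 - (h_0r_0)^2$ for that shelf while $h_Qh_0r_0$ of $Q$'s area is expressly reserved for the shelf $S'$ in which $Q$ physically lands -- and tops up the second shelf with the $(h_0r_0)^2$ held in the buffer $b_0$. You instead bypass $b_0$ and pull $h_0r_0(h_Q - h_0r_0)$ out of $Q$ for \emph{each} shelf, leaving $S'$ only the residue $(h_Q - h_0r_0)^2 + (h_0r_0)^2$. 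That residue is strictly less than $h_Qh_0r_0$ whenever $h_0r_0 < h_Q < h_0$; the shortfall equals $(h_Q - h_0r_0)(h_0 - h_Q)$, which peaks at $(h_0/4)^2 = 1/256$. This matters because Lemma~\ref{lemma:smallvery} requires every occupant of $p_3$ and $p_4$, including the square $Q$ that closed $p_1,p_2$, to contribute at least $r_0 = 1/2$ density over its used length, i.e.\ at least $h_Qh_0r_0$, and the slack in that lemma's final inequality $1.398/8 - h_Q/4 + h_Q^2 > 5/32$ is only about $0.0029$ at its minimum, less than $1/256 \approx 0.0039$. So the paper's use of $b_0$ in this lemma is not cosmetic: it is exactly what lets $Q$ enter $S'$ at full $r_0$-density. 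To salvage your route you would need to carry the unspent $1/64$ of $b_0$ forward to the $p_3,p_4$ analysis and re-prove Lemma~\ref{lemma:smallvery} under the weaker residue -- a nontrivial change to the paper's scheme, not a drop-in replacement.
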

\begin{proof}
Let $Q$ be the new square which fits neither shelf and let $h_Q$ be it's height if $Q \in c_0$ or the height of the vertical shelf if $Q \in c_{1+}$. Then the unused regions of both shelves must be shorter than $h_Q$. For one shelf, we can use lemma~\ref{lemma:vert} to ensure the wasted space is less than $(h_0r_0)^2$. For the other shelf, we can assign $(h_0r_0)^2$ from $b_0$ to ensure that the wasted space is at most $h_Qh_0r_0 - (h_0r_0)^2 \leq h_0^2r_0 - (h_0r_0)^2 = (h_0r_0)^2$. Then the total assigned covered area is at least $2(\ell h_0r_0 - (h_0r_0)^2)$.
\end{proof}

\noindent
\textbf{Small and Very Small Squares:  }
We now consider inputs containing only small and very small squares and show that these inputs can be packed.

\begin{lemma}
\label{lemma:smallvery}
Any set of small and very small squares with total area $\leq 3/8$ can be packed into the container.
\end{lemma}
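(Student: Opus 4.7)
The plan is to argue by contradiction: assume some square $Q$ arrives that the algorithm fails to place, and derive that the total assigned covered area strictly exceeds $3/8$, contradicting the hypothesis on the input. The first step is to show that failure forces all four primary shelves to be closed. Either $Q$ is a small square that fits neither $p_3$ nor $p_4$ (so by the algorithm it closes both), or $Q$ is very small in some class $c_k$, its open vertical shelf has no room, and the newly opened vertical shelf of width $h_k \leq h_1 = 1/8$ fits in neither $p_3$ nor $p_4$. In both sub-cases $p_3$ and $p_4$ jointly close at the moment of failure, and since the algorithm is in that phase, $p_1$ and $p_2$ were already jointly closed earlier.

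Next I would lower bound the assigned covered area of the four primary shelves using Lemma~\ref{lemma:2closed}. The pair $(p_1, p_2)$, each of length $1$, contributes at least $2(1 \cdot h_0 r_0 - (h_0 r_0)^2) = 7/32$. The pair $(p_3, p_4)$, by the same shelf-by-shelf analysis with lengths $\ell_3 \geq 0.692$ and $\ell_4 \geq 0.706$, contributes at least $(\ell_3 + \ell_4)/8 - 1/32$. Then I would add the area of $Q$ itself. When $Q$ is small, $h_Q > 1/8$ forces $Q^2 > 1/64$, and combining this with the two pair contributions pushes the sum strictly past $3/8$ after a short computation, giving the desired contradiction.

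The main obstacle is the very small case, where $Q^2 > (h_k r_k)^2$ alone is too small to close the roughly $0.013$ gap between the primary shelves' generic bound and $3/8$. To handle this I would revisit the proof of Lemma~\ref{lemma:2closed}: the only use of $h_Q$ there is the bound $h_Q h_0 r_0 - (h_0 r_0)^2$ on the ``wasted space'' of the second shelf after the $b_0$ bonus is applied. When the closing trigger has width $h_Q = h_k \leq h_0 r_0 = 1/8$, this quantity becomes non-positive, so the $b_0$ bonus absorbs the second shelf's waste entirely. The refined contribution of $(p_3, p_4)$ is then $(\ell_3 + \ell_4)/8 - (h_0 r_0)^2 = (\ell_3 + \ell_4)/8 - 1/64$, which together with $7/32$ from $(p_1, p_2)$ already exceeds $3/8$ without needing $Q$'s area. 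Combining the two cases secures the contradiction in every scenario, so no such failing $Q$ can exist.
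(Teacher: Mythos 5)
Your overall strategy mirrors the paper's: conclude $\geq 7/32$ assigned to the closed pair $(p_1,p_2)$ via Lemma~\ref{lemma:2closed}, then argue $(p_3,p_4)$ cannot also be closed because the leftover budget of $5/32$ would be exceeded. Where you differ is in how you bound the $(p_3,p_4)$ side, and in how explicitly you treat the closing trigger. The paper's proof works directly with the raw covered area of the used regions and the area of the offending square: it writes $(\ell_3 - h_Q)h_0 r_0 + (\ell_4 - h_Q)h_0 r_0 + h_Q^2 = 1.398/8 - h_Q/4 + h_Q^2$ and shows this exceeds $5/32$ for $h_Q > 1/8$. Crucially, the paper phrases the contradiction only for a \emph{small} square $Q$ and never explicitly addresses the possibility that the closing trigger is a newly opened vertical shelf of width $h_Q \le 1/8$ holding a single tiny square whose own area is negligible. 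You noticed exactly this gap and patched it by refining Lemma~\ref{lemma:2closed}: when $h_Q \le h_0 r_0$ the second shelf's waste $h_Q h_0 r_0 - (h_0 r_0)^2$ is non-positive, so the $b_0$ bonus absorbs it completely and the pair is assigned at least $(\ell_3 + \ell_4)/8 - 1/64 > 5/32$ with no help from $Q$'s area. That refinement is correct and actually supplies the missing case in the paper's argument.

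One technical caution on your small-$Q$ sub-case: you take the generic Lemma~\ref{lemma:2closed} contribution $(\ell_3+\ell_4)/8 - 1/32$ and simply add $Q^2$ on top. But Lemma~\ref{lemma:2closed}'s bound (through Lemma~\ref{lemma:vert} and Corollary~\ref{cor:MM}) already charges up to $h_Q h_0 r_0 - (h_0 r_0)^2$ of $Q$'s area to the first shelf, so adding the full $h_Q^2$ again double-counts by exactly $h_Q/8 - 1/64$. The correct remainder is $h_Q^2 - (h_Q/8 - 1/64)$, giving a total of $(\ell_3+\ell_4)/8 - h_Q/8 - 1/64 + h_Q^2$. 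It turns out this corrected expression and your (optimistic) expression $(\ell_3+\ell_4)/8 - 1/32 + h_Q^2$ agree at the worst case $h_Q \to 1/8^+$ and both are increasing on $(1/8, 1/4]$, so your final numeric conclusion still holds; but the intermediate claim that $Q^2$ can be added wholesale to the Lemma~\ref{lemma:2closed} bound is not literally valid as stated. The paper avoids this subtlety by bounding raw covered area directly rather than invoking Lemma~\ref{lemma:2closed} for $(p_3,p_4)$; you would do well to do the same for the small-$Q$ branch to avoid having to argue carefully about which parts of $Q$ are already assigned.
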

\begin{proof} 
By lemma~\ref{lemma:2closed}, $p_1$ and $p_2$ are assigned a combined covered area of at least $2(1 \cdot 1/4 \cdot 1/2-(1/4 \cdot 1/2)^2) = 7/32$ when they are closed. It follows then that the remaining input of at most $5/32$ should not close $p_3$ and $p_4$. Assume for the sake of contradiction that some small square $Q$ with height $h_Q$ closes these shelves. 

Since we are now packing $p_3$, we know that the second half of $b_0$ has been packed and we can use it to balance the two shelves as in lemma~\ref{lemma:lu}. Then the total area assigned to the shelves plus the area of $Q$ must be at least $0.692h_0r_0 + 0.706h_0r_0 - 2 \cdot h_Qh_0r_0 + h_Q^2 = 1.398/8 - h_Q/4 + h_Q^2$, which is greater than $5/32$, a contradiction. \\
\end{proof}

\noindent
\textbf{Small, Very Small, and Large Squares:  }

\begin{lemma}
\label{lemma:smallverylarge}
Any set of small, very small, and large squares with total area $\leq 3/8$ can be packed into the container.
\end{lemma}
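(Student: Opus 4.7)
The plan is to show that a large square in the upper right corner coexists without conflict with the small and very small packing on the left and bottom of the container. The key numerical fact is that any large square has $h_L > 1/2$ and area strictly greater than $1/4$, so the total small and very small area satisfies $T \leq 3/8 - h_L^2 < 1/8$.

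I would first dispatch the buffer regions. All of $b_0, b_1, b_2, b_3, b_{4+}$ lie in the upper left of the container with rightmost extent $< 0.308$ (from section~\ref{sec:buffer}), while any large square has height at most $\sqrt{3/8} < 0.613$ and thus left edge at $x \geq 1 - \sqrt{3/8} > 0.387$. So the buffers and the large square are automatically disjoint and nothing more needs to be said about them.

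The main step is to bound the used length of $p_2$, the only primary shelf whose row can intersect the vertical extent of a large square (since $p_1$ sits entirely at $y \leq 1/4$, well below $1 - \sqrt{3/8}$). I would apply Lemma~\ref{lemma:lu} to $p_1, p_2$ to obtain $h_0 \ell_u \leq A$, where $A$ is the current small and very small covered area and $\ell_u$ is the longest used length in either shelf. Since $h_0 = 1/4$ this gives $\ell_u \leq 4A \leq 4T$. For no conflict between $p_2$ and the large square's vertical column $x \in [1-h_L, 1]$, it suffices that $\ell_u + h_L \leq 1$. Substituting the bound yields $\ell_u + h_L \leq 4(3/8 - h_L^2) + h_L = 3/2 + h_L - 4h_L^2$, which is $\leq 1$ exactly when $4h_L^2 - h_L - 1/2 \geq 0$. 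This quadratic factors with roots $-1/4$ and $1/2$, so it is nonnegative precisely on the large-square regime $h_L \geq 1/2$, with equality at the boundary.

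Since $\ell_u \leq 4T < 1/2$, neither $p_1$ nor $p_2$ ever approaches its closing length $\ell = 1$, so no small or very small square is ever packed into $p_3$ or $p_4$, where it would otherwise collide with the large square on the right portion of those shelves. Because the inequality $\ell_u + h_L \leq 1$ is maintained at every moment using the current covered area, the reserved upper right corner is always available when the large square arrives, which handles arbitrary arrival orders of large versus non-large squares. The main obstacle is that the algebra is tight at $h_L = 1/2$, so the argument just barely closes; this is where the strength of Lemma~\ref{lemma:lu} and the precise choice of $3/8$ are essential, and any looser bound on $\ell_u$ or target density would break the analysis.
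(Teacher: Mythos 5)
Your argument is correct and follows the paper's approach: both rely on Lemma~\ref{lemma:lu} to convert the used length $\ell_u$ of $p_1,p_2$ into a lower bound on covered area, and then a one-variable quadratic check shows $\ell_u$ plus the large-square height cannot exceed $1$; you parameterize by $h_L$ where the paper parameterizes by $\ell_u$, but the calculation is the same, tight at $h_L=1/2$. Your explicit observation that $\ell_u < 1/2$ keeps $p_1,p_2$ open (so $p_3,p_4$ and the buffers are never in play) is a cleaner replacement for the paper's terser remark that the large square's footprint can be treated as density-$>1/2$ shelf area when arguing the reverse direction.
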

\begin{proof} 
The first concern in this case is that small square packed into $p_1$ and $p_2$ will overlap the region reserved for the largest remaining square. To show that this cannot happen, we consider the sum of two lengths: the length $\ell_u$ of the longest used area among $p_1$ and $p_2$ and the length $\ell_G$ which is equal to the height of the largest possible square remaining in the input. Using Claim $3$, this sum can be expressed as $\ell_u + \ell_G = \ell_u + \sqrt{3/8 - h_0 \ell_u} = \ell_u + \sqrt{3/8 - \ell_u/4}$ which is $\leq 1$ for all $\ell_u \leq 1/2$. When $\ell_u > 1/2$, the total area of squares received is $> 1/8$ by Lemma~\ref{lemma:lu} and large squares are no longer possible.

The second concern is that a large square may interfere with the packing of small squares into the primary shelves. However, we merely consider all four primary shelves to be packed at a density greater than $1/2$ for a used length equal to the height of the large square and all previous lemmas hold.
\end{proof}

\subsection{Combining All Four Size Classes}\label{sec:combine}
In this final subsection, we will show that our algorithm handles inputs containing any combination of size classes. \\

\noindent
\textbf{Additional Terminology:  }
In considering the space which is shared by small and medium shelves we introduce the notion of partially closed shelves. We call a shelf \textit{closed to medium squares} if we have received a medium square which does not fit into the shelf. We call a pair of shelves \textit{closed to small squares} if we have received a small square which does not fit into either shelf. In addition, we consider any pair of shelves closed to small squares to also be closed to medium squares. 

The surprising fact about using the bottom half as both a pair of small shelves and a medium shelf simultaneously is that many previous lemmas still hold.

\begin{lemma}
\label{lemma:bottom}
When packing medium, small, and very small squares with total area $\leq 3/8$, previous guarantees for open and closed shelves still apply.
\end{lemma}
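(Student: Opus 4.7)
The plan is to perform a case analysis on the status of each pair of primary shelves ($\{p_1,p_2\}$ or $\{p_3,p_4\}$), indexed by which size classes have already triggered a closure. Four states arise: fully open, closed to medium only, closed to small only, and fully closed. For each state I will verify that the guarantees of Lemmas~\ref{lemma:lu}, \ref{lemma:2closed}, and \ref{lemma:vert} continue to hold when medium squares are interleaved with small and very small squares, reusing the same assignments from $b_0$ and the vertical buffer $b_{1+}$.

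First I would handle the open states. The key observation is that a medium square of height $h_m > h_0 = 1/4$ spans the full combined height $2h_0 = 1/2$ of a pair, occupies length $h_m$ in both shelves simultaneously, and covers area $h_m^2 \ge h_m \cdot h_0$. Since Lemma~\ref{lemma:lu} requires an assigned covered area of at least $h_0\ell_u$ on the pair, and a medium square over-satisfies this density on its own footprint, appending a medium square to an alternately packed pair preserves the invariant without drawing anything extra from $b_0$. Thus the open-pair guarantee extends directly to the mixed setting.

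For the closed cases, when a small square or vertical shelf $Q$ closes the pair to small squares, Lemma~\ref{lemma:2closed} applies essentially unchanged: the unused tail of each shelf is shorter than $h_Q \le h_0$, so the waste per shelf is at most $h_0 \cdot h_0 r_0 - (h_0 r_0)^2 = (h_0 r_0)^2$ after charging $(h_0 r_0)^2$ from $b_0$ to one of the two shelves. When a medium square closes the pair to medium only, the pair remains open to small squares, and by the density observation above the medium-occupied suffix already carries covered area exceeding $h_0$ per unit length, so the $h_0 \ell_u$ invariant continues to hold on the combined used length including the medium block. When the pair finally becomes fully closed, the gap in the middle between the left-to-right small packing and the right-to-left medium packing is bounded exactly as in Lemma~\ref{lemma:2closed} by the height of the closing square, yielding the same $2(\ell h_0 r_0 - (h_0 r_0)^2)$ lower bound on assigned covered area.

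The main obstacle I anticipate is verifying geometric consistency between the two opposite-direction packings sharing a common strip: that a newly arriving medium square actually has room on the right, that a newly arriving small square or vertical shelf actually has room on the left, and that the two frontiers never silently overlap before a formal closure is declared. This reduces to a length-budget argument using the global area cap of $3/8$ together with Lemma~\ref{lemma:lu} to bound $\ell_u$ from the left in terms of assigned covered area, combined with the direct area bound $h_m^2 > 1/16$ per medium square on the right, in the same spirit as the interaction between small and large squares treated in Lemma~\ref{lemma:smallverylarge}. Once that budget check is established, every prior open and closed shelf guarantee transfers to the mixed setting.
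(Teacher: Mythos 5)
Your proposal takes essentially the same approach as the paper: both hinge on the observation that the small/very-small packing from the left and the medium packing from the right each maintain density at least $1/2$ on their used footprints in the height-$1/2$ strip, so Corollary~\ref{cor:MM} (for a medium-triggered closure) and Lemma~\ref{lemma:2closed} (for a small-triggered closure) carry over directly. Your final hedge about ``silent overlap'' is a non-issue --- the algorithm by construction never places a square where it would overlap, so the only thing the analysis must certify is the covered-area bound at the moment a closure is declared, which is exactly what the two cited lemmas give.
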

\begin{proof} 
For both sizes, we have a guarantee for open shelves that half of their used area is covered. This holds until the point at which some square $Q$ would cause these two used areas to collide.

If $Q$ is medium, we say that the bottom half is closed to medium squares and because both used areas have the same minimum density of $1/2$, we can still apply Corollary~\ref{cor:MM}. If $Q$ is small we say that the bottom half is closed to small squares and by the same argument, we can apply Lemma~\ref{lemma:2closed}.
\end{proof}

\begin{corollary}
\label{cor:allvslarge}
Packing medium, small, and very small squares together into the bottom half will not overlap the area reserved for the largest possible remaining square.
\end{corollary}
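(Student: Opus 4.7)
The goal is to show that the combined packing of medium, small, and very small squares in the bottom half does not encroach on the upper-right reservation for the largest possible remaining input square. My plan is to combine the small/very small argument from Lemma~\ref{lemma:smallverylarge} with a fresh area-budget argument for the medium squares on the right.

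The first step is a geometric observation: everything packed in the bottom half lies in $y \leq 1/2$, so an overlap with the reservation $[1-\ell_G,1]^2$ is possible only when $\ell_G > 1/2$, which forces the already-packed area to be strictly less than $1/8$. On the small/very small side, Lemma~\ref{lemma:smallverylarge} already yields the bound $\ell_u + \ell_G \leq 1$ on the longest used length from the left; by Lemma~\ref{lemma:bottom}, the density guarantee of Lemma~\ref{lemma:lu} that underlies this estimate is preserved even when medium squares share the bottom half, so no adaptation is needed on that side.

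The substantive new step is handling the medium squares, which are packed right-to-left. I would argue by contradiction: suppose the $k$-th medium square from the right, with height $h_k$, overlaps the reservation. The geometry then forces $h_k > 1 - \ell_G$ (vertical overlap) together with $h_1 + \cdots + h_{k-1} < \ell_G$ (horizontal overlap). Combining these with the total-area bound $\sum_i h_i^2 + \ell_G^2 \leq 3/8$ and the fact that each $h_i > 1/4$ should yield a contradiction: for $k=1$ the condition collapses to $(1-\ell_G)^2 + \ell_G^2 \leq 3/8$, a quadratic in $\ell_G$ whose discriminant is negative; for $k \geq 2$, each extra medium square both consumes more than $1/16$ of the area budget and tightens the lower bound on $\ell_G$ by at least $1/4$, so the sum $\sum h_i^2 + \ell_G^2$ quickly exceeds $3/8$.

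I expect the main obstacle to be presenting the $k \geq 2$ cases uniformly, since the geometry couples the horizontal constraint on $\ell_G$ with the heights of all squares to the right of $Q_k$. However, because each additional medium square simultaneously tightens the area budget and forces $\ell_G$ larger, the slack is comfortable and a short case split (or a single monotonicity argument in $k$) should close the proof.
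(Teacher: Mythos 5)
Your argument is correct, but it departs from the paper's route in a way worth noting. The paper dispatches this corollary in one line: it invokes Lemma~\ref{lemma:bottom} directly, reasoning that because the density guarantee (covered area at least $\frac{1}{2}$ of used length) is preserved when medium squares share the bottom half with small and very small ones, the computation in Lemma~\ref{lemma:smallverylarge} --- namely $\ell_u + \ell_G = \ell_u + \sqrt{3/8 - \ell_u/4} \leq 1$ for $\ell_u \leq 1/2$, and covered area exceeding $1/8$ once $\ell_u > 1/2$ --- carries over unchanged to the mixed packing. You instead decouple the two directions: you leave the left-to-right small/very-small side to Lemmas~\ref{lemma:bottom} and~\ref{lemma:smallverylarge}, and you handle the right-to-left medium side with a fresh area-budget contradiction, showing that the vertical constraint $h_k > 1 - \ell_G$, the horizontal constraint $\sum_{i<k} h_i < \ell_G$, and the budget $\sum_i h_i^2 + \ell_G^2 \leq 3/8$ are jointly infeasible (the $k=1$ quadratic $2h_1^2 - 2h_1 + 5/8$ has negative discriminant, and for $k \geq 2$ the budget is blown even faster). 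Your version is more explicit about the geometry of the medium squares approaching the reserved corner from the opposite side, which the paper's one-liner leaves implicit inside the density framework; the trade-off is that you do a small case split that the paper avoids by abstracting everything to the uniform-density invariant. Both arguments are sound, and your explicit treatment of the right-to-left packing arguably makes the proof more self-contained.
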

\begin{proof} 
The proof follows from the previous lemma. \\
\end{proof}

\begin{lemma}
\label{lemma:all}
Large, medium, small, and very small squares with total area $\leq 3/8$, can be packed into a unit square container.
\end{lemma}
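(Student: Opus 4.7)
The plan is to reduce the general statement to earlier results by case analysis on which of the four size classes actually appear in the input. If at most three classes are present, one of Lemmas~\ref{lemma:med}, \ref{lemma:medlarge}, \ref{lemma:smallvery}, or \ref{lemma:smallverylarge}, or else Lemma~\ref{lemma:bottom} together with Corollary~\ref{cor:allvslarge}, already covers the case directly. So the only genuinely new situation is the simultaneous presence of all four size classes.

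In that case I would first observe that the large square, if one appears, is unique: its area exceeds $1/4$ while the total area is at most $3/8$. Moreover, as soon as the covered area of the container exceeds $1/8$, no further large square can arrive, so the upper-right reservation maintained by the algorithm is only required during an initial phase and can be released afterwards. The bottom half is then handled exactly as in Lemma~\ref{lemma:bottom}: the common density-$1/2$ guarantee enjoyed by both the medium packing and the small/very small packing means that at the moment the right-to-left medium region and the left-to-right small region would first conflict, we may declare the bottom half closed to one of the two classes and apply Corollary~\ref{cor:MM} or Lemma~\ref{lemma:2closed} on the other side. Corollary~\ref{cor:allvslarge} then rules out any encroachment onto the large-square reservation throughout this process.

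The main obstacle is establishing the analogous claim for the top half, since there the medium packing grows right-to-left along the top shelf while the small and very small packing grows left-to-right into $p_3$ and $p_4$, and the two regions can meet at an arbitrary position. I would mirror the proof of Lemma~\ref{lemma:bottom}: each side's used region has density at least $1/2$ (by Lemma~\ref{lemma:lu} for small and very small, by construction for medium), so at the instant of first collision the top half may be declared closed to one class, with Corollary~\ref{cor:MM} or Lemma~\ref{lemma:2closed} applying on the other side. To conclude that the entire packing fits within area $3/8$ while leaving the upper-left buffers $b_{1+}$ untouched, I would appeal to the length computation of section~\ref{sec:buffer}, which gives $p_3$ and $p_4$ lengths at least $0.692$ and $0.706$, and would carefully track each buffer increment from $b_0$ and $b_{1+}$ so that it is credited to exactly one primary-shelf pair. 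This last bookkeeping step is what keeps the balancing arguments behind Lemmas~\ref{lemma:lu} and~\ref{lemma:2closed} consistent across both halves simultaneously, and it is where I expect the most care to be needed.
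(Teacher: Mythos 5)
Your reduction at the outset contains an error, and you miss the key structural insight that makes the paper's proof short. First, the case of medium plus small plus very small squares (with no large square) is \emph{not} directly covered by any earlier lemma: Lemma~\ref{lemma:bottom} only establishes density guarantees for the bottom half, and Corollary~\ref{cor:allvslarge} only concerns non-overlap with the large-square reservation; neither shows that the top half accommodates the leftover input. That three-class case is in fact the genuine new content of Lemma~\ref{lemma:all} (the large square, once handled via Corollary~\ref{cor:allvslarge}, Lemma~\ref{lemma:medlarge}, and Lemma~\ref{lemma:smallverylarge}, drops out entirely), so your claim that the only new situation is ``all four classes simultaneously'' is backwards.

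Second, the route you propose for the remaining case --- ``mirror Lemma~\ref{lemma:bottom} for the top half and track buffer credits'' --- is not the paper's argument and is substantially harder to make precise, which you yourself acknowledge. The top half has the added wrinkle that $p_3$ and $p_4$ have different lengths ($\geq 0.692$ and $\geq 0.706$) and the medium shelf there has length only $0.692$, so ``first collision'' is ill-defined across three misaligned boundaries. The paper avoids this entirely by conditioning on the \emph{last square received}. If the last square is small or very small, Lemma~\ref{lemma:smallvery} applies directly (with Lemma~\ref{lemma:bottom} certifying that its underlying density guarantees survive the presence of medium squares). If the last square is medium, either the top half holds only medium squares (Lemma~\ref{lemma:med} applies) or both classes appear there, in which case the bottom half is already closed to small squares, Lemma~\ref{lemma:bottom} gives $\geq 7/32$ assigned to the bottom, and a direct computation $\tfrac{1}{4}(0.69 - h_Q) + h_Q^2 > 5/32$ yields the contradiction. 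This ``last square'' case split is the missing idea: it lets the argument fall back on lemmas already proved rather than requiring a new top-half analogue of Lemma~\ref{lemma:bottom} with its own buffer bookkeeping.
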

\begin{proof} 
Using Corollary~\ref{cor:allvslarge} and Lemmas~\ref{lemma:medlarge} and~\ref{lemma:smallverylarge} we can see that a proof of this lemma for inputs without large squares will also apply to inputs with large squares. So for simplicity, we consider inputs without large squares.

To show that such inputs can be packed into our container, we focus on the final square which is received by our algorithm. If the final square received is small or very small, then Lemma~\ref{lemma:smallvery} applies. If the final square is medium, we consider two cases. \\

\noindent
\textbf{Case 1:}
The top half (excluding the buffer region) is packed with only medium squares, then Lemma~\ref{lemma:med} applies. \\

\noindent
\textbf{Case 2:}
The top half (excluding the buffer region) is packed with small/very small and medium squares. In this case, the bottom half is closed to small squares and Lemma~\ref{lemma:bottom} guarantees that the bottom half has been assigned a covered area of at least $7/32$ and only $5/32$ remains for the top half. We assume for the sake of contradiction that the final medium square $Q$ with height $h_Q$ does not fit into the top shelf which has length greater than $0.69$. Then area assigned to the top half must be at least $1/4(0.69 - h_Q) + h_Q^2$ which is greater than $5/32$, a contradiction. \\
\end{proof}

\begin{theorem}
\label{theorem:squares}
Any set of squares with total area at most $\leq 3/8$, which is received in an online fashion, can be packed into a unit square container.
\end{theorem}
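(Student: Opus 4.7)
The plan is to recognize that Theorem~\ref{theorem:squares} is essentially a restatement of Lemma~\ref{lemma:all}: the only nominal difference is that the lemma enumerates the four size classes explicitly while the theorem says ``any set of squares.'' So the proof reduces to two sanity checks: (i) the size classes of Section~\ref{sec:class} partition the full range of possible square heights, and (ii) the algorithm in Section 3 dispatches each incoming square to its subroutine based on the current square alone, which makes the procedure truly online.

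First I would verify the partition. Large covers $(1/2, 1]$, medium covers $(1/4, 1/2]$, small covers $(1/8, 1/4]$, and very small covers $(0, 1/8]$. These four classes are disjoint and exhaust every possible height in $(0, 1]$, so every incoming square falls into exactly one subroutine. The online property follows because each subroutine in Section 3 depends only on the current square's height and the current packing state; no future squares are consulted. This is worth stating explicitly since the theorem's hypothesis is precisely that the input arrives online.

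Given these observations, I would then simply invoke Lemma~\ref{lemma:all}, which asserts the packability conclusion for the full general case of inputs combining all four classes with total area at most $3/8$. That lemma has already reduced the multi-class analysis to the single-class results of Section~\ref{sec:largemed} (Lemma~\ref{lemma:med}) and Section~\ref{sec:smallverypack} (Lemma~\ref{lemma:smallvery}), the mixed-class results involving large squares (Lemma~\ref{lemma:medlarge} and Lemma~\ref{lemma:smallverylarge}), and the shared-shelf argument for medium plus small/very-small inputs (Lemma~\ref{lemma:bottom} together with Corollary~\ref{cor:allvslarge}). The proof of the theorem is therefore effectively a one-line citation of Lemma~\ref{lemma:all}.

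The main obstacle has already been surmounted by the preceding chain of lemmas rather than by anything new in this final step. The most delicate piece was Lemma~\ref{lemma:bottom} and Corollary~\ref{cor:allvslarge}, which show that the bottom half of the container can simultaneously serve as a medium shelf and as a small/very-small shelf pair without breaking any density guarantee or conflicting with the reserved corner for a potential large square. With that in hand, Theorem~\ref{theorem:squares} requires no further technical work and follows immediately.
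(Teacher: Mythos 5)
Your proposal is correct and matches the paper's approach: the paper's proof of the theorem is a one-line appeal to the preceding lemmas, with Lemma~\ref{lemma:all} doing the work, exactly as you describe. Your added sanity checks (the four size classes partition all heights, and the dispatch is genuinely online) are sound and make explicit what the paper leaves implicit.
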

\begin{proof} 
The proof follows from all previous lemmas.
\end{proof}

\section{Extensions}
We are currently extending this work to the more general case of packing rectangles into a unit square.

\small

\end{document}